\newtheorem{theorem}{Theorem}
\newtheorem{lemma}{Lemma}
\newtheorem{proposition}{Proposition}
\newtheorem{remark}{Remark}
\DeclareMathOperator{\sinc}{sinc}
\newcommand{\nn}{\nonumber}
\newcommand{\integers}{\mathbb{Z}}
\newcommand{\vect}[1]{\mathbf{#1}}
\newcommand{\const}[1]{{\mathcal{#1}}}
\newcommand{\Reals}{\mathbb{R}}
\newcommand{\Complex}{\mathbb{C}}
\newcommand{\Naturals}{\mathbb N}
\newcommand{\snr}{\text{SNR}}
\newcommand{\ie}{\emph{i.e.}}
\newcommand{\der}{\mathrm{d}}
\newcommand{\E}{\mathsf{E}}
\newcommand{\U}{\mathcal U}
\newcommand{\defeq}{\overset{\text{def}}{=}}
\newcommand*{\qed}{\hfill\ensuremath{\square}}%
\newcommand{\iid}{\textnormal{i.i.d.}}
\newcommand*\xbar[1]{%
  \hbox{%
    \vbox{%
      \hrule height 0.7pt 
      \kern0.3ex
      \hbox{%
        \kern-0.2em
        \ensuremath{#1}%
        \kern-0.0em
      }%
    }%
  }%
}
\begin{document}

\title{Capacity-Achieving Input Distribution in Per-Sample Zero-Dispersion Model of Optical Fiber%
\thanks{
Jihad Fahs is with the School of Engineering, Lebanese American University, Beirut 1102 2801, Lebanon (e-mail: jihad.fahs@lau.edu.lb). Aslan Tchamkerten and
 Mansoor I. Yousefi  are with the 
Communications and Electronics Department, T\'el\'ecom ParisTech,
Universit\'e Paris-Saclay, 75013 Paris, France (e-mails: \{aslan.tchamkerten,yousefi\}@telecom-paristech.fr).
}
}


\author{Jihad Fahs, Aslan Tchamkerten and Mansoor I. Yousefi}

\date{}

\maketitle

\begin{abstract}
The per-sample zero-dispersion channel model of the optical fiber is considered. 
It is shown that capacity is uniquely achieved by an input probability distribution that 
has  continuous  uniform  phase  and  discrete  amplitude  that  takes  on  finitely  many values.
This result holds when the channel is subject to general input cost constraints, that include a peak amplitude
constraint and a joint average and peak amplitude constraint.

\end{abstract}

\emph{Index terms:} Capacity-achieving distributions, optical fiber, zero dispersion.


\section{Introduction}

Signal propagation in optical fibers can be modeled by the stochastic nonlinear Schr\"odinger (NLS)
equation~\cite{kramer2015upper}, capturing chromatic dispersion, Kerr
nonlinearity, and amplified spontaneous emission (ASE) noise.  Finding the capacity and the 
spectral efficiency of such a channel remains a formidable challenge, even in the special case of zero dispersion. The chief reason for this is that the channel is nonlinear. To date the only non-asymptotic (in input power)  capacity result states that the capacity of 
the optical fiber is upper bounded by $\log(1+\snr)$, where \snr\ is the signal-to-noise ratio \cite{yousefi2015cwit,kramer2015upper}.  

In this paper we consider the per-sample zero-dispersion (PZD) channel model of the optical fiber. This 
channel is obtained by setting the dispersion coefficient to zero in the NLS equation and by sampling the output signal at the input signal bandwidth rate \cite[Sec. III]{yousefi2011opc}.
These simplifications yield a discrete memoryless channel that maps a complex input $x\in\Complex$ to a complex output
$y\in\Complex$ through a conditional probability density function (pdf) $p_{Y|X}(y|x)$. For this channel we show that capacity is uniquely achieved by an input signal that has continuous uniform phase and discrete amplitude that takes on finitely
many values independently of the phase. This result holds whenever the input is subject to a peak amplitude constraint, a joint peak amplitude 
and average cost constraint, or an average cost constraint with a cost function satisfying certain regularity conditions. 
This proves a conjecture made in \cite{yousefi2011opc} and shows that multi-ring modulation formats that are popular in optical fiber communication \cite{agrawal2007} are indeed optimal for the simplified PZD channel .

\subsection*{Related work}

The conditional pdf in the PZD model can be expressed as an infinite series
\cite{mecozzi1994llh,turitsyn2003ico,yousefi2011opc}.  The asymptotic capacity of the PZD model
 is $\frac{1}{2}\log \const P +o(1)$ as the average input power $\const P\rightarrow\infty$
\cite{turitsyn2003ico,yousefi2011opc}. Moreover, this asymptotic capacity is achieved by continuous pdfs, up to the $o(1)$ term \cite{mecozzi1994llh,turitsyn2003ico,yousefi2011opc}.

Since the work of Smith~\cite{smith1971ica}, several authors have established the discreteness of the capacity-achieving input distributions for a variety of channels and input constraints 
\cite{smith1971ica,Hirt88,shamai1995cap,Das,IA01,Aslan,chan2005cap,Gursoy2005,fahsj,Fahs2,fahs2017it}. 
Examples include complex additive white Gaussian noise channel \cite{shamai1995cap}, linear channels with additive noise~\cite{Das,Aslan,Fahs2,fahs2017it}, and  conditionally Gaussian channels~\cite{chan2005cap,fahsj,IA01,Gursoy2005}.

\subsection*{Contributions}
We first show that the capacity of the PZD channel is uniquely achieved and that the optimal input has a uniform phase that is independent of the amplitude. The main proof ingredient here is a symmetry argument.
In a second part we show that
the optimal amplitude takes on a finite number of values, following the methodology established by 
Smith \cite{smith1971ica}. Although the proof roadmap here is known, implementing it is far from straightforward; unlike channels considered so far where this proof technique is used (see, {\it{e.g.}}, \cite{smith1971ica,shamai1995cap,Das, Aslan,IA01,chan2005cap,Gursoy2005,fahsj,Fahs2,fahs2017it})
the PZD channel is non-additive---since the phase noise is a complex function of the input amplitude---and non-conditionally Gaussian. 

The paper is organized as follows. In
Section~\ref{sec:pre} we recall the PZD channel model. In Section~\ref{sec:main} we present the main result and prove it in Section~\ref{sec:prems}. Finally, in Section~\ref{sec:con} we draw a few concluding remarks.

\subsection*{Notation}

Random  variables  and  their  realizations  are  denoted by  upper  and  lower  case  letters,  respectively. 
Real, non-negative real, integer, positive integer, non-negative integer, and complex numbers are respectively denoted by 
$\Reals$, $\Reals^+_0$,  $\mathbb Z$, $\Naturals$, $\Naturals_0$ and $\Complex$. 
The real and imaginary parts of $x \in \Complex$ are denoted by $\Re (x)$ and $\Im (x)$,  respectively.
The cumulative distribution function (cdf) and the pdf of a random variable $X$ are denoted by $F_X(x)$ and $p_{X}(x)$, respectively. 
The expected value of a random variable $X$ is $\E(X)$. 
The uniform distribution on the interval $[a,b)$
is denoted by $\U (a,b)$. Given two functions $f(x):\Reals\mapsto\Reals$ and $g(x):\Reals\mapsto\Reals$ we use the following standard order notations. We write $f(x)=\omega(g(x))$, or equivalently $g(x)=o(f(x))$, if for any $k>0$
there exists a $c>0$ such that $|f(x)|>k |g(x)|$ for all $|x|\geq
c$. We write $f(x)=\Omega(g(x))$ if there exists a $k>0$ and $c>0$ such that $|f(x)|>k |g(x)|$ 
for all $|x|\geq c$. Finally, we write $f(x) \equiv g(x)$ as $x \rightarrow x_0$ if $\lim_{x \rightarrow x_0} \frac{f(x)}{g(x)} = 1$.


\section{The Per-Sample Zero-Dispersion Channel}
\label{sec:pre}

We briefly recall the PZD channel model derived in \cite[Sec.~III. A, Eq. 18]{yousefi2011opc}. 
The reader familiar with this model may move on to the conditional pdf \eqref{condpdf}.
 
Let $Q(t,z)$ denote the complex envelope of the signal as a function of time $t$ and distance $z$ along the
fiber. The propagation of the signal in the optical fiber is  described by a partial differential equation known as the stochastic NLS equation \cite[Chap.~2,3]{agrawal2007}. 
Setting the dispersion to zero in equation \cite[Eq.~7, with $\beta_2=0$]{kramer2015upper}, we obtain the following stochastic ordinary differential equation
\begin{IEEEeqnarray}{rCl}
\label{eq:zerodisp}
\frac{d Q(t,z)}{d z} = j\gamma Q(t,z) |Q(t,z)|^2 + N(t,z),\quad 0\leq
z\leq \mathcal L, 
\end{IEEEeqnarray}
where $\gamma$ is the nonlinearity coefficient, $\const L$ is the fiber length, and $j\defeq\sqrt{-1}$.
Furthermore, $N(t,z)$ is (zero-mean) circularly symmetric complex Gaussian noise satisfying
\begin{align*}
  \E \bigl(N(t,z)N^*(t',z')\bigr)=\sigma^2_0 \delta_{\const W}(t-t')\delta(z-z'),
\end{align*}
where $\sigma_0^2$ is the noise in-band power spectral
density, $\delta(x)$ is the Dirac delta function, and $\delta_{\const W}(x)\defeq \const W\sinc(\const W x)$, in which $\sinc(x)\defeq\sin(\pi
x)/(\pi x)$, and $\const W$ is the noise bandwidth.

Equation \eqref{eq:zerodisp} defines a continuous-time channel
from the input $Q(t,0)$ to the output $Q(t,\const{L})$. To obtain an equivalent discrete-time model we need to sample $Q(t,0)$ at rate $\const{W}(0)$ and $Q(t,\const{L})$ at rate $\const{W}(\mathcal{L})$,  where $\const W(z)$ 
denotes the signal bandwidth at distance $z$. Because of the nonlinearity, the signal bandwidth changes along the fiber and therefore in general $\const{W}(\mathcal{L})\ne\const{W}(0)$ \cite[Sec.~VIII]{yousefi2011opc}, \cite{kramer2017autocorrelation}.
How the bandwidth changes as a function of distance remains an important open problem in optical fiber communication. 
The PZD channel arises by assuming a sub-optimal receiver that samples the output signal at the input rate $1/\const{W}$, $\const{W}=\const{W}(0)$. 
This channel is discrete-time, memoryless, and stationary and maps, at rate $1/\const{W}$, an input $X\in \mathbb{C}$ to a random output $Y \in \mathbb{C}$.
%
%
%
%
The input output relation is obtained by solving \eqref{eq:zerodisp} \cite[Eq.~30]{yousefi2011opc}
\begin{align}\label{eq:zerodiso}
Y=[X+W(\const L)]e^{j \gamma \int_0^\const L |X+W(\ell)|^2 d\ell},
\end{align}
where $W(\ell)$ is a complex Wiener process with variance $\ell \sigma^2 $, $\sigma^{2}=\sigma_{0}^{2}\mathcal{W}$.
Letting $(R_0,\Phi_0)$ and  $(R,\Phi)$ denote the polar coordinates of 
$X$ and $Y$, respectively, equation \eqref{eq:zerodiso} equivalently defines the conditional pdf (see \cite{mecozzi1994llh}, \cite{turitsyn2003ico}, \cite[Eq.~18]{yousefi2011opc})
 \begin{equation}
\label{condpdf}
p_{R,\Phi|R_0,\Phi_0}(r,\phi | r_0,\phi_0) \defeq p(r,\phi | r_0,\phi_0) =  \frac{1}{2 \pi} p_{R|R_0}(r|r_0) + \frac{1}{\pi}\hspace{-0.05cm}\sum^{+\infty}_{m =1}\hspace{-0.05cm}\Re\left(C_{m}(r,r_0)e^{jm(\phi-\phi_0-\gamma r^2_0\mathcal{L})}\right),
\end{equation}
 where 
\begin{equation}
\label{condR}
p_{R|R_0}(r|r_0) \defeq \frac{2r}{\sigma^2\mathcal{L}}e^{-\frac{r^2+r_0^2}{\sigma^2 \mathcal{L}}} I_0\left(\frac{2 r r_0}{\sigma^2 \mathcal{L}}\right),
\end{equation} 
\begin{equation}
\label{coeffC}
C_m(r,r_0) \defeq r b_me^{-a_m(r^2+r_0^2)}I_{m}\left(2b_mr_0r\right),
\end{equation}
where $I_m(\cdot)$ is the modified Bessel function of the first kind and where
\begin{eqnarray}
a_m &\defeq& \frac{\sqrt{jm\gamma}}{\sigma}\coth\left(\sqrt{jm\gamma\sigma^2}\mathcal{L}\right), \label{coeffa}\\
b_m &\defeq & \frac{\sqrt{jm\gamma}}{\sigma}\frac{1}{\sinh\left(\sqrt{jm\gamma\sigma^2}\mathcal{L}\right)}. \label{coeffb}
\end{eqnarray}
Note that, because of the potentially sub-optimal discretization of the output, the capacity of the PZD channel \eqref{condpdf} (measured in bits/s) is at most equal to the capacity of the continuous-time zero-dispersion channel \eqref{eq:zerodisp} \cite{yousefi2011opc}. 

\section{Main Result}
\label{sec:main}

We consider the PZD channel \eqref{condpdf} when the input $(R_0,\Phi_0)$ is subject to one of the following constraints.
\begin{description}
\item[Peak amplitude constraint:] 
The cdf of $R_0$ belongs to the set 
\begin{IEEEeqnarray}{rCl}
{\mathcal{P}} &\defeq& \Bigl\{F_{R_{0}}(r_0) \,\, : \,\,\int _{0}^{\rho} dF_{R_0}(r_0) =1\Bigr\},
\label{acos}  
\end{IEEEeqnarray}
for some $0< \rho<\infty$. 

\item[Average cost constraint.] The cdf of $R_0$ belongs to the set
\begin{IEEEeqnarray}{rCl}
  \mathcal{A} &\defeq & \Bigl\{F_{R_{0}}(r_0) ~:~ \int _{0}^{+\infty}
\mathcal{C}(r_0) \, dF_{R_0}(r_0) \leq A \Bigr\},
\label{ccos}
\end{IEEEeqnarray}
for some $0<A<\infty$ and cost function $\mathcal{C}(r_0): \Reals_0^+\mapsto \Reals_0^+$ that satisfies the  following conditions: 
\begin{itemize}
\item[C1.] $\mathcal{C}(r_0)$ is lower semi-continuous,  non-decreasing, $\mathcal{C}(0) = 0$ and $\lim_{r_0 \rightarrow +\infty}\mathcal{C}(r_0) = +\infty$;
  \item[C2.] $\mathcal{C}(r_0)$ can be analytically extended from $[0,\infty)$ to a horizontal strip in the complex plane
  \begin{IEEEeqnarray*}{rCl}
\mathcal{S}_{\delta} = \Bigl\{z \in \Complex ~:~
  \left|\Im(z)\right| < \delta, \:,\:\delta>0 \Bigr\},     
    \end{IEEEeqnarray*}
for some $\delta>0$;
\item[C3.]
$\mathcal{C}(r_0)=\omega(r^2_0)$.
\end{itemize}

\item[Joint peak amplitude and average cost constraint.] The cdf of $R_0$ belongs to the set ${\mathcal{P}}\cap {\mathcal{A}}'$, where ${\mathcal{A}}'$ is defined as ${\mathcal{A}}$ but with condition C3 replaced by the weaker condition C3: $\mathcal{C}(r_0)=\omega(\ln r_0)$.

\end{description}

An example of a family of cost functions satisfying conditions C1-C2 is $\mathcal C(r_0)=r_0^q$  for $q \in \Naturals$. 


Abusing slightly notation, we will use $X$ and $Y$ to denote both the complex values and their representations in polar coordinates. With this convention, the channel capacity is 
\begin{IEEEeqnarray}{rCl}
\label{eq:cap}
C =  \sup_{F_{X}: F_{R_{0}}(r_0) \in \mathcal{F}}I\left(X;  Y\right), 
\end{IEEEeqnarray}
where $I\left(X; Y\right)  $ denotes the mutual information and $\mathcal{F}$ represents any of the sets $\mathcal{P}$, $\mathcal{A}$, or 
$\mathcal{P} \cap \mathcal{A}'$.

\begin{theorem}
\label{mnth} 
The channel capacity in \eqref{eq:cap} is finite and is achieved by a unique cdf $F_{R^*_0,\Phi^*_0}(r_0,\phi_0)$ where the phase $\Phi^*_0$ is uniform over $[0,2\pi)$ and where the amplitude $R^*_0$ takes on a finite number of values independently of $\Phi^*_0$. 
\end{theorem}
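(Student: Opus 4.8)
The plan is to establish Theorem~\ref{mnth} in two parts: first to use the rotational symmetry of the channel to fix the optimal phase and reduce \eqref{eq:cap} to an optimization over the amplitude distribution, and then to apply Smith's technique \cite{smith1971ica} to that reduced problem. \emph{Symmetry reduction.} By \eqref{eq:zerodiso} and the circular symmetry of the complex Wiener process $W(\cdot)$, the channel is invariant under a common rotation of input and output, $p_{Y|X}(e^{j\theta}y\mid e^{j\theta}x)=p_{Y|X}(y\mid x)$ for every $\theta$. Hence, for any feasible $F_X$, its rotational average $\bar F_X$ (the law of $e^{j\Theta}X$ with $\Theta\sim\U(0,2\pi)$ independent of $X$) is again feasible, because the cost depends on $X$ only through $R_0=|X|$; and since $F_X\mapsto I(F_X)$ is concave and invariant under input rotation, $I(\bar F_X)\ge I(F_X)$. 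Therefore the supremum in \eqref{eq:cap} is attained within the rotationally invariant inputs, that is, those with $\Phi_0\sim\U(0,2\pi)$ independent of $R_0$. For such an input, integrating \eqref{condpdf} over $\phi_0$ annihilates every term with $m\ge1$, so the output phase is uniform and independent of the output amplitude $R$, whose density is $p_R=\int_0^\infty p_{R|R_0}(\cdot\mid r_0)\,dF_{R_0}(r_0)$ with $p_{R|R_0}$ the Rician kernel \eqref{condR}. A short calculation then gives $I(X;Y)=\Psi(F_{R_0})$, where
\begin{IEEEeqnarray}{rCl}
\Psi(F_{R_0}) &\defeq& I(R_0;R)+\int_0^{\infty}\ell(r_0)\,dF_{R_0}(r_0),\nn
\end{IEEEeqnarray}
and $\ell(r_0)\defeq\log 2\pi-h(\Phi\mid R,R_0=r_0,\Phi_0=0)$ is a fixed bounded function; since $I(R_0;R)$ is concave in $F_{R_0}$ and the second term is linear, $\Psi$ is concave. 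It remains to maximize $\Psi$ over $F_{R_0}\in\mathcal F$.

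\emph{Existence, finiteness, uniqueness.} Under a peak constraint $\mathcal F$ is carried by $[0,\rho]$; under an average cost constraint, C1 together with C3 (resp.\ the peak part of the joint constraint) forces a uniform bound on $\E[R_0^2]$ and tightness of $\mathcal F$. In every case $\mathcal F$ is convex and weakly compact and $\Psi$ is weakly upper semicontinuous, so a maximizer $F^*_{R_0}$ exists; denote by $p^*_R$ the corresponding output amplitude density. The same bound together with boundedness of $\ell$ gives $C<\infty$. For uniqueness, the map $F_{R_0}\mapsto p_R$ is injective since the Rician densities $\{p_{R|R_0}(\cdot\mid r_0)\}_{r_0\ge0}$ are linearly independent; hence $F_{R_0}\mapsto h(p_R)$, and therefore $\Psi$, is strictly concave. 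Moreover, injectivity of $F_X\mapsto p_Y$ makes the rotational-averaging inequality strict unless $F_X$ is already rotationally invariant, so every capacity-achieving input is rotationally invariant. Consequently the capacity-achieving distribution is unique and of the claimed product form, with amplitude law $F^*_{R_0}$.

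\emph{KKT conditions and analyticity.} Following Smith, concavity of $\Psi$ yields, through a Lagrangian argument, a multiplier $\mu\ge0$ and a constant $\kappa$ such that the marginal information density
\begin{IEEEeqnarray}{rCl}
\Xi(r_0) &\defeq& D\bigl(p_{R|R_0}(\cdot\mid r_0)\,\|\,p^*_R\bigr)+\ell(r_0)-\mu\,\mathcal C(r_0)\nn
\end{IEEEeqnarray}
satisfies $\Xi(r_0)\le\kappa$ for all $r_0\ge0$, with equality on $\operatorname{supp}(F^*_{R_0})$. The crucial step is to show that $\Xi$ extends to an analytic function on a horizontal strip about $[0,\infty)$. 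The cost term extends by C2; for the divergence term one uses that the conditional density \eqref{condpdf}--\eqref{coeffb} is, for each fixed output, entire in the amplitude variable (through the entire functions $e^{-z^2/(\sigma^2\mathcal L)}$ and $I_m$), together with Morera / dominated-convergence estimates that are uniform in $z$ over the strip and over the unbounded output space. \textbf{This is the main obstacle.} Unlike the additive, conditionally Gaussian channels to which Smith's method has previously been applied \cite{smith1971ica,shamai1995cap,Das, Aslan,IA01,chan2005cap,Gursoy2005,fahsj,Fahs2,fahs2017it}, here the conditional law of the phase is the infinite Bessel series in \eqref{condpdf} and the entropy integrals involve $\log I_0$, whose branch points --- the zeros of $I_0$ --- migrate toward the real axis as $r\to\infty$; showing that these do not obstruct analyticity of the integrated quantity, and likewise controlling $\log p^*_R$, requires careful asymptotics of $a_m,b_m,C_m$ and tail estimates for $p^*_R$.

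\emph{Conclusion.} Suppose $\operatorname{supp}(F^*_{R_0})$ were infinite. If a peak constraint is present the support lies in the compact set $[0,\rho]$ and hence has an accumulation point; in the pure average-cost case one first shows, using C3, that the support is bounded, so again it has an accumulation point. Then $\Xi-\kappa$ is analytic and vanishes on a set with an accumulation point, so by the identity theorem $\Xi\equiv\kappa$ on all of $[0,\infty)$. But a bounded support makes $p^*_R$ have Gaussian-type tails, whence $D(p_{R|R_0}(\cdot\mid r_0)\,\|\,p^*_R)$ grows like $r_0^2/(\sigma^2\mathcal L)$ as $r_0\to\infty$ while $\ell$ remains bounded; combined with condition C3 --- $\mathcal C=\omega(r_0^2)$, or, in the joint case, the analyticity of $\mathcal C$ precluding the required sub-quadratic correction --- this prevents $\Xi$ from being constant, a contradiction. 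Hence $F^*_{R_0}$ has finitely many points of increase, which completes the proof of Theorem~\ref{mnth}.
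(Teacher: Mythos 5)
Your roadmap coincides with the paper's (symmetry reduction to a uniform phase, compactness and continuity, strict concavity via injectivity, KKT conditions, analytic extension plus the identity theorem, and a growth contradiction), and your divergence form $\Xi(r_0)$ of the KKT functional is equivalent to the paper's $\text{LHS}_{\rho}$ and $\text{LHS}_{A}$ in \eqref{eq:KKT1}--\eqref{eq:KKT}. However, two of the steps you rely on are genuine gaps rather than routine omissions. The first is uniqueness: to conclude that \emph{every} maximizer is rotationally invariant you invoke injectivity of the full map $F_X\mapsto p_Y$ for arbitrary inputs, which you never establish, and the justification you give for the amplitude map --- that the Rician densities are ``linearly independent'' --- does not imply injectivity of the mixture map (linear independence only rules out finite null combinations, whereas injectivity requires that no nonzero signed measure is annihilated by the kernel; for the $m\neq 0$ harmonics of \eqref{condpdf}, with the complex parameters $a_m,b_m$, such a completeness statement is not obvious). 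The paper gets injectivity of the whole channel in one stroke from the distance-reversal symmetry of \eqref{eq:zerodisp}: substituting $z'=\const L-z$ gives $p_{X|Y}(x|y;\gamma)=p_{Y|X}(x|y;-\gamma)$, so the channel is invertible (proof of Lemma~\ref{lemm:sup}); without this, or a completeness argument for the kernels, your strict-concavity/uniqueness step is incomplete.

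The second gap is in the endgame and in the parts you defer. Continuity (upper semicontinuity) of the mutual information and boundedness of your $\ell(r_0)$ are asserted, and the analyticity you flag as ``the main obstacle'' is indeed the bulk of the proof: what is needed (Lemma~\ref{lem:analyticity}) is a uniform-in-$z$, summable bound on the terms $C_m(r,z)e^{jm(\phi-\phi_0-\gamma z^2\const L)}$ making the extension $s(z;r,\phi,\phi_0)$ of \eqref{eq:sz} entire, and strict positivity of $p(r,\phi|r_0,\phi_0)$ on $\Reals_0^+$ (proved in the paper via a Karhunen--Lo\`eve argument) so that $\Re(s)>0$ on an open covering and the principal logarithm applies, followed by Morera with $o(1/r^2)$ domination; your concern about branch points of $\log I_0$ is misdirected, since $\ln p(r;F_0^*)$ requires no extension in $z$ at all. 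Finally, your contradiction does not cover all constraint classes: in the pure average-cost case you ``first show the support is bounded using C3'' without saying how (the paper instead rules out arbitrarily large mass points by the divergence of the lower bound of Lemma~\ref{lemlowlhs} evaluated at those points), and in the joint peak/average case, where C3 is weakened to $\mathcal{C}(r_0)=\omega(\ln r_0)$, the quadratic terms may balance at leading order, so the argument ``$D$ grows like $r_0^2/(\sigma^2\const L)$ while $\mu\,\mathcal{C}$ grows faster'' is unavailable; the paper needs the two-sided squeeze of \eqref{eqlimub1} against \eqref{eqlimlb}, obtained from both Lemma~\ref{lemupplhs} and Lemma~\ref{lemlowlhs} with the Laguerre asymptotics and the $\epsilon\to 0$ limit, and ``analyticity of $\mathcal{C}$ precluding the required sub-quadratic correction'' is not a substitute for that computation.
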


Hence, under fairly general conditions the support of the optimal input consists of a finite number of concentric rings in the complex plane. Without
a peak constraint, the same result holds provided that the cost function grows faster than $r_0^{2}$. Whether the conclusion extends to a strict average power constraint, that is with cost function $\const C(r_0)=r_0^2$, remains an open problem.

\begin{remark}
\label{rem:multiple}
Theorem~\ref{mnth} still holds when the input is subject to a finite number of constraints given by cost functions $\mathcal{C}_i(r_0)$, $1\leq i\leq M<\infty$, satisfying C1--C2, and such that $\const C_i(r_0)=\omega(r^2_0)$ for at
least one cost constraint. For example, we may consider a joint second and a fourth moment
constraint as in the non-coherent Rician fading channel \cite{Gursoy2005}. 
\end{remark}


\section{Proof of Theorem~\ref{mnth}}
\label{sec:prems}
To prove Theorem~\ref{mnth} we first show that capacity is finite and achieved by a 
unique cdf. Then, we show that the optimal input phase is uniform and independent 
of the optimal input amplitude. Finally, we prove that the optimal input amplitude takes on a finite
number of values.

\begin{lemma}
The capacity \eqref{eq:cap} is finite and achieved by a unique cdf.


\label{lemm:sup}
\end{lemma}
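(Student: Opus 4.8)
The plan is to establish the three parts of the claim — finiteness of $C$, existence of a maximizer, and its uniqueness — in that order, via the standard functional-analytic route. Finiteness is the quickest. Under each of the constraint families the second moment $\E(R_0^2)$ is finite: for $\mathcal P$ and for $\mathcal P\cap\mathcal A'$ because $R_0\le\rho$; for $\mathcal A$ because $\mathcal C(r_0)=\omega(r_0^2)$ yields an $r_1$ with $\mathcal C(r_0)\ge r_0^2$ for $r_0\ge r_1$, so $\E(R_0^2)\le r_1^2+A$. Since $|Y|=|X+W(\mathcal L)|$, the channel then has a finite signal-to-noise ratio, and the bound $I(X;Y)\le\log(1+\snr)$ of \cite{yousefi2015cwit,kramer2015upper} caps the supremum in \eqref{eq:cap}.

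For existence I would first observe that $\mathcal P$, $\mathcal A$ and $\mathcal P\cap\mathcal A'$ are convex and weakly compact sets of probability measures on $\Reals_0^+$: the peak-constrained families are supported on the compact set $[0,\rho]$, while for $\mathcal A$ the condition $\lim_{r_0\to\infty}\mathcal C(r_0)=+\infty$ together with $\int\mathcal C\,dF_{R_0}\le A$ gives tightness by Markov's inequality, and lower semicontinuity of $\mathcal C$ (condition C1) makes the constraint set weakly closed, so compactness follows from Prokhorov's theorem. It then remains to show that $F_X\mapsto I(X;Y)$ is upper semicontinuous on these sets. Writing $I(X;Y)=h(Y)-\int h(Y\,|\,X=x)\,dF_X(x)$, the two ingredients are: (i) $h(Y\,|\,X=x)$ is continuous in $x$ and \emph{uniformly bounded} — the phase part contributes at most $\log 2\pi$, and the amplitude part $h(R\,|\,R_0=r_0)$ is bounded since, conditioned on the input, $Y$ carries a noise of fixed variance $\sigma^2\mathcal L$ about $x$ — so the conditional term is finite and weakly continuous; (ii) the induced output density $p_Y(y)=\int p(r,\phi\,|\,r_0,\phi_0)\,dF_X$ converges pointwise under weak convergence of $F_X$ (for fixed output the kernel \eqref{condpdf} is bounded and continuous in the input), and $h(Y)$ is upper semicontinuous under such pointwise convergence of uniformly dominated densities (via Fatou applied to $-t\log t$). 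Hence $I(X;Y)$ is u.s.c.\ on a weakly compact set and attains its maximum.

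Uniqueness rests on concavity: $\int h(Y\,|\,X=x)\,dF_X$ is affine in $F_X$ and $h(Y)$ is concave in $p_Y$, which is itself affine in $F_X$, so $F_X\mapsto I(X;Y)$ is concave; if $F_1\ne F_2$ were both optimal then so would be every $F_\lambda=\lambda F_1+(1-\lambda)F_2$, and \emph{strict} concavity of the differential entropy forces the induced output densities to coincide, $p_{Y,1}=p_{Y,2}$, for otherwise $I(X;Y)$ would exceed $C$ at $F_\lambda$. It therefore suffices to prove that the channel map $F_{R_0,\Phi_0}\mapsto p_{R,\Phi}$ is injective, and I expect this to be the main obstacle. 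Expanding \eqref{condpdf} in its Fourier series in $\phi$ decouples the problem across harmonics: one must show for each $m$ that $\int C_m(r,r_0)\,d(\mu_1-\mu_2)(r_0)\equiv 0$ — together with the analogous statement for $p_{R|R_0}$ at $m=0$ — implies $\mu_1=\mu_2$, i.e.\ a completeness statement about the Rician-/Bessel-type kernels \eqref{condR}--\eqref{coeffC} in the input variable. I would attack this through the analyticity of $r_0\mapsto p_{R|R_0}(r|r_0)$ and $r_0\mapsto C_m(r,r_0)$ combined with the identity theorem, a moment argument, or a Hankel-transform inversion — the same analyticity that will later drive the discreteness proof. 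Once injectivity is in hand, $p_{Y,1}=p_{Y,2}$ yields $F_1=F_2$, completing the proof.
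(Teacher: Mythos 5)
Your finiteness and existence arguments are essentially the paper's route (compactness of the constraint sets plus continuity/upper semicontinuity of $I$ in $F_X$), and they are fine in outline; note only that your Fatou step for $h(Y)$ and the "uniformly bounded" claim for $h(Y|X=x)$ need the same two-sided bounds on the conditional and output pdfs that the paper develops in Appendix~\ref{app:cont}, since $-t\ln t$ changes sign and one needs integrable dominating functions on both the $t<1$ and $t>1$ regions.

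The genuine gap is in the uniqueness step, precisely where you say you expect the main obstacle: the injectivity of the linear map $F_X\mapsto F_Y$ is not proved, only attacked with a list of candidate methods (identity theorem, moments, Hankel inversion). What you would actually have to show is a completeness statement for the kernels \eqref{condR}--\eqref{coeffC} with \emph{complex} parameters $a_m,b_m$, and moreover the condition is not the one you wrote: because of the nonlinear phase term $e^{-jm(\phi_0+\gamma r_0^2\mathcal{L})}$ in \eqref{condpdf}, the $m$-th harmonic of the output involves $\int C_m(r,r_0)\,e^{-jm(\phi_0+\gamma r_0^2\mathcal{L})}\,d(F_1-F_2)(r_0,\phi_0)$, so amplitude and phase do not decouple into a purely radial moment problem; one must recover the full joint measure from these mixed functionals, which is a substantive piece of analysis you have not carried out. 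The paper sidesteps all of this with a short symmetry argument: changing variable $z'=\mathcal{L}-z$ in \eqref{eq:zerodisp} shows that $X$ is obtained from $Y$ through the same channel with $\gamma\mapsto-\gamma$ and noise $-N(t,\mathcal{L}-z')$, and since $-N$ and $N$ are identically distributed this gives $p_{X|Y}(x|y;\gamma)=p_{Y|X}(x|y;-\gamma)$ independently of the input law; hence $F_X=\int p_{X|Y}(\cdot|y)\,dF_Y(y)$ is determined by $F_Y$, the map $F_X\mapsto F_Y$ is injective, and strict concavity of $h(Y)$ in $F_Y$ transfers to strict concavity of $I$ in $F_X$. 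Until you either carry out the kernel-completeness argument or adopt such a reversibility argument, uniqueness is not established in your proposal.
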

\begin{proof}  
It is known that that the  the set of 
input distributions satisfying an input constraint in Section \ref{sec:main} is compact \cite[Thm.~3]{FAF15}, \cite[Prop.~1]{smith1971ica}. 
Further, in Appendix~\ref{app:cont} we prove that  mutual information $I(X;Y)$ is continuous in
$F_{X}$. 
Therefore, from the extreme value theorem \cite[Thm. 4.16]{rudin1964principles}, the supremum in \eqref{eq:cap} is finite and achieved.

For uniqueness, it suffices to prove that $I(X;Y) = h(Y) - h(Y|X)$ is a strictly concave function of $F_{X}$.  
The term $h(Y|X)$ as a function of $F_{X}$ is linear hence concave. The term $h(Y)$ is strictly concave in $F_{Y}$. Below, we prove that the linear mapping $F_{X}\mapsto F_{Y}$ is injective which then implies that $h(Y)$, and therefore $I(X;Y)$, is strictly concave in~$F_{X}$. 

Changing variable $z'=\const{L}-z$ in \eqref{eq:zerodisp}, $X$ is obtained from $Y$ according to \eqref{eq:zerodisp} with
$\gamma\mapsto -\gamma$ and $N(t,z)\mapsto -N(t,\const{L}-z')$. Since $-N$ and $N$ are identically distributed, we obtain the symmetry relation $p_{X|Y}(x|y;\gamma) = p_{Y|X}(x|y;-\gamma)$. The channel is therefore
invertible and injective.
\end{proof}

\subsection{Optimal input phase}

\label{sec:phase-uniform}

In this section we show that the optimal input phase is uniform, independently  of the optimal input amplitude.

\begin{lemma}
\label{lem:chansym}
The following properties hold:
\begin{enumerate}
\item the output amplitude $R$ is independent of the input phase $\Phi_0$;
\item the capacity-achieving input in \eqref{eq:cap} has uniform phase
  $\Phi_0^*\sim \U (0,2\pi)$ independent of the amplitude $R_0^*$, \ie, 
\begin{equation*}
\label{inrotinv}
dF_{(R^*_0,\Phi_0^{*})}(r_0,\phi_0) = \frac{1}{2\pi}d\phi_0dF_{R^*_0}(r_0).
\end{equation*} 
\end{enumerate}
\end{lemma}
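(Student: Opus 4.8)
The plan is to establish the two claims almost separately: claim~(1) follows from a one-line computation on the series \eqref{condpdf}, and claim~(2) from a rotational symmetry of the channel combined with the uniqueness of the capacity-achieving input proved in Lemma~\ref{lemm:sup}.

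For claim~(1), I would compute the conditional law of the output amplitude by integrating \eqref{condpdf} over the output phase. Since $\int_0^{2\pi} e^{jm(\phi-\phi_0-\gamma r_0^2\mathcal{L})}\,d\phi=0$ for every integer $m\geq 1$, every harmonic in the series integrates to zero and one is left with
\[
p_{R|R_0,\Phi_0}(r\mid r_0,\phi_0)=\int_0^{2\pi} p(r,\phi\mid r_0,\phi_0)\,d\phi=p_{R|R_0}(r\mid r_0),
\]
which is independent of $\phi_0$; hence, conditioned on $R_0$, the output amplitude $R$ does not depend on the input phase $\Phi_0$. It is worth recording, for the next step, that the same computation shows $p(r,\phi\mid r_0,\phi_0)$ depends on the phases only through the difference $\phi-\phi_0$.

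For claim~(2), the key fact is that the channel is invariant under a common rotation of input and output: from \eqref{condpdf} — or directly from \eqref{eq:zerodiso} and the circular symmetry of the Wiener process $W$ — one has $p(r,\phi+\theta\mid r_0,\phi_0+\theta)=p(r,\phi\mid r_0,\phi_0)$ for all $\theta\in\Reals$. Therefore, if $X^*$ achieves the supremum in \eqref{eq:cap}, then for every $\theta$ the rotated input $X^{(\theta)}\defeq X^*e^{j\theta}$ is again admissible — each constraint in Section~\ref{sec:main} depends on the input only through its amplitude, which a rotation preserves — and it induces the output $Y^*e^{j\theta}$, so that $I(X^{(\theta)};Y^{(\theta)})=I(X^*;Y^*)=C$. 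By the uniqueness part of Lemma~\ref{lemm:sup}, this forces the cdf of $X^{(\theta)}$ to coincide with that of $X^*$ for every $\theta$, i.e.\ $F_{X^*}$ is rotationally invariant.

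It then remains to convert rotational invariance into the claimed product form. Disintegrating $F_{X^*}$ over the amplitude $R_0^*$, the conditional distribution of $\Phi_0^*$ given $R_0^*=r_0$ is, for $F_{R_0^*}$-almost every $r_0$, a rotation-invariant probability measure on the circle and hence equal to $\U(0,2\pi)$; this gives $dF_{(R_0^*,\Phi_0^*)}(r_0,\phi_0)=\frac{1}{2\pi}\,d\phi_0\,dF_{R_0^*}(r_0)$. I expect the only point requiring care to be the invariance of mutual information under the rotation: one must verify that $(x,y)\mapsto(xe^{j\theta},ye^{j\theta})$ is a measurable bijection taking the joint law of $(X^*,Y^*)$ to that of $(X^{(\theta)},Y^{(\theta)})$, whence $I$ is unchanged; the remaining steps are routine.
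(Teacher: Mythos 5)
Your proposal is correct and follows essentially the same route as the paper: rotation invariance of the channel together with the uniqueness from Lemma~\ref{lemm:sup} forces a rotationally invariant optimal law, hence uniform phase independent of amplitude. The only minor difference is in part~(1), where you integrate the series \eqref{condpdf} over the output phase to kill the harmonics, while the paper argues directly from \eqref{eq:zerodiso} via the circular symmetry of $W(\const L)$; both yield the same conclusion that the conditional law of $R$ given the input depends only on $R_0$.
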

\begin{proof}
\emph{1)} From \eqref{eq:zerodiso}
\begin{align}
  R&=\left|R_0e^{j\Phi_0}+W(\const{L})\right|\nn\\
&=\left|R_0+W'(\const{L})\right|\nn\\
&\overset{d}{=}\left|R_0+W(\const{L})\right|\nn,
\end{align}
where $W'(\const L)\defeq e^{-j\Phi_0} W(\const{L})$. 
The last equality is in distribution where we used the fact that $W'(\const L)$ and
$W(\const{L})$ are identically distributed from the circularly symmetry property. 

\emph{2)} From \eqref{eq:zerodiso} and by the circularly symmetry of the complex Wiener process, if output $Y$ corresponds to input $X$ then output $e^{j\theta}Y$ corresponds to input $e^{j\theta}X$, for any fixed $\theta \in [0,2\pi)$. 
Therefore, the mutual information is invariant under an input rotation.  
Hence, if $X^*$ is capacity-achieving, so is 
$e^{j\theta} X^*$ for any $ \theta \in [0,2\pi)$. By Lemma~\ref{lemm:sup}, the capacity-achieving input distribution is unique.
This implies that $X^*$ has a uniform phase $\Phi_0^*\sim\U(0,2\pi)$, that is independent of the amplitude.

\end{proof}

Lemmas~\ref{lemm:sup} and~\ref{lem:chansym} yield:
\begin{proposition}
\label{cormax}
The capacity \eqref{eq:cap} simplifies to
\begin{IEEEeqnarray}{rCl}
\label{chancap}
C &=& 
\max_{F_{R_{0}} \in \mathcal{F}}  I\left(F_{R_0}\right),
\end{IEEEeqnarray}
 where
 \begin{IEEEeqnarray*}{rCl}
I\left(F_{R_0}\right) &\defeq&  I\left(R_0,\Phi^*_0;R,\Phi\right)\nonumber \\ 
& =& h\left(R\right) + \ln (2 \pi) - h\left(R,\Phi|R_{0},\Phi^*_{0}\right) \label{eq:mutsim},
 \end{IEEEeqnarray*}
with  $\Phi^*_{0}\sim\U (0,2\pi)$. 

\end{proposition}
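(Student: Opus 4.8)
The plan is to combine the two lemmas essentially by bookkeeping. By Lemma~\ref{lem:chansym}, the (unique) capacity-achieving input factorizes as $dF_{(R_0^*,\Phi_0^*)}(r_0,\phi_0)=\frac{1}{2\pi}\,d\phi_0\,dF_{R_0^*}(r_0)$, and moreover each admissible constraint set — namely $\mathcal{P}$, $\mathcal{A}$, or $\mathcal{P}\cap\mathcal{A}'$ — restricts only the marginal cdf $F_{R_0}$. Hence for every $F_{R_0}\in\mathcal{F}$ the product distribution $\frac{1}{2\pi}\,d\phi_0\,dF_{R_0}(r_0)$ is an admissible input, so the supremum in \eqref{eq:cap} is attained within this family of product inputs, giving
\begin{equation*}
C=\max_{F_{R_0}\in\mathcal{F}}\,I\!\left(R_0,\Phi_0^*;R,\Phi\right),\qquad \Phi_0^*\sim\U(0,2\pi)\ \text{independent of}\ R_0 .
\end{equation*}
It then remains to bring $I(R_0,\Phi_0^*;R,\Phi)$ into the announced form $h(R)+\ln(2\pi)-h(R,\Phi\,|\,R_0,\Phi_0^*)$. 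I would write $I(R_0,\Phi_0^*;R,\Phi)=h(R,\Phi)-h(R,\Phi\,|\,R_0,\Phi_0^*)$ — a decomposition that is legitimate because capacity is finite by Lemma~\ref{lemm:sup} and the relevant densities are integrable by the bounds of Appendix~\ref{app:cont} — and evaluate the two terms separately.

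For the unconditional term, when $\Phi_0^*\sim\U(0,2\pi)$ is independent of $R_0$, equation \eqref{eq:zerodiso} together with the circular symmetry of the Wiener process $W(\cdot)$ (already invoked in the proof of Lemma~\ref{lem:chansym}) shows that the law of $Y$ is invariant under the rotation $Y\mapsto e^{j\theta}Y$ for every $\theta$. Consequently $\Phi$ is uniform on $[0,2\pi)$ and independent of $R$, i.e. the polar-coordinate density of the output factorizes as $p_{R,\Phi}(r,\phi)=\frac{1}{2\pi}\,p_R(r)$ on $\Reals_0^+\times[0,2\pi)$. Integrating $-p_{R,\Phi}\ln p_{R,\Phi}$ yields $h(R,\Phi)=h(R)+\ln(2\pi)$, where $h(R)$ is finite and, by part~1 of Lemma~\ref{lem:chansym} ($R$ independent of $\Phi_0$), depends on the input only through $F_{R_0}$. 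For the conditional term, the pdf \eqref{condpdf} depends on $\phi$ and $\phi_0$ only through the difference $\phi-\phi_0$; since differential entropy is translation invariant, $h(R,\Phi\,|\,R_0=r_0,\Phi_0=\phi_0)$ does not depend on $\phi_0$, so $h(R,\Phi\,|\,R_0,\Phi_0^*)=\int_0^{\infty}h\!\left(R,\Phi\,|\,R_0=r_0,\Phi_0=0\right)dF_{R_0}(r_0)$ is also a functional of $F_{R_0}$ alone. Substituting the last two facts gives $I(R_0,\Phi_0^*;R,\Phi)=h(R)+\ln(2\pi)-h(R,\Phi\,|\,R_0,\Phi_0^*)=I(F_{R_0})$, which together with the first display is exactly \eqref{chancap}.

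The only genuinely delicate step is the rotational-invariance argument yielding $h(R,\Phi)=h(R)+\ln(2\pi)$: one must check that the output law is truly rotation invariant — which follows from the circular symmetry of $W(\ell)$ at \emph{every} $\ell$, not merely at $\ell=\const{L}$, applied to \eqref{eq:zerodiso} — and that $h(R)$ and $h(R,\Phi\,|\,R_0,\Phi_0^*)$ are both finite, so that the subtraction defining $I$ is unambiguous and the split above is valid. Both facts are inherited from Lemma~\ref{lemm:sup} and the integrability estimates of Appendix~\ref{app:cont}; the rest is routine.
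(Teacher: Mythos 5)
Your proposal is correct and fills in exactly the argument the paper leaves implicit when it states that Lemmas~\ref{lemm:sup} and~\ref{lem:chansym} yield the proposition: restrict to product inputs $\frac{1}{2\pi}d\phi_0\,dF_{R_0}$ (legitimate since the constraints act only on $F_{R_0}$ and the unique optimizer has this form), then use rotational invariance of the output law to write $h(R,\Phi)=h(R)+\ln(2\pi)$ and the $\phi-\phi_0$ dependence of \eqref{condpdf} to reduce the conditional term to a functional of $F_{R_0}$. This is the same approach as the paper's, just written out in detail.
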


From Proposition~\ref{cormax}, finding capacity reduces to maximizing the strictly concave function $I(F_{R_0})$. The set $\mathcal{F}$ is linear, and thus weakly differentiable, in $F_{R_0}$.
Extending a result in \cite{fahs2017it} for additive noise channels to the PZD channel, 
it can be shown that $I(F_{R_0})$ is weakly differentiable.
As a result, the KKT conditions characterize the optimal input $F_{R_0^*} $ in terms of necessary and sufficient conditions.

\subsection{KKT conditions.}
The derivation of the KKT conditions can be found in 
\cite[Corollary 1]{smith1971ica} for the peak amplitude constraint, and in 
\cite[Theorem 15]{fahsj} for the average cost constraint. We present their final forms   below. 
\begin{description}
\item[Peak amplitude constraint.]

An input amplitude $R_0^*$ with cdf $F_0^*\in\mathcal{P}$  
 achieves the capacity $C$ in \eqref{chancap} if and only if
\begin{IEEEeqnarray}{rCl}
   \text{LHS}_{\rho}(r_0) &\overset{\text{def}}{=}&   C  - \ln(2\pi) + \int^{+\infty}_{0} p
   \left(r|r_0\right) \ln p(r;F_{0}^{*})\,dr  + \frac{1}{2 \pi} \int_{0}^{2 \pi} h\left(R,\Phi|r_{0},\phi_{0}\right) \, d\phi_0
   \nn
   \\
   &\geq& 0,
  \label{eq:KKT1}
\end{IEEEeqnarray}
for all $0 \leq r_0 \leq \rho$, with equality if $r_0$ is a
point of increase of $F_{0}^*$.

\item[Average cost constraint.]

An input amplitude $R_0^*$ with cdf $F_0^*\in\mathcal{A}$ achieves the capacity $C$ in 
\eqref{chancap} if and only if there exists $\nu \geq 0$ such that
\begin{IEEEeqnarray}{rCl}
 \text{LHS}_{A}(r_0)  &\overset{\text{def}}{=}&   C - \ln(2\pi) +
 \int^{+\infty}_{0} p \left(r|r_0\right) \ln p(r;F_{0}^{*})\,dr + \nu (\mathcal{C}(r_0) - A) + \frac{1}{2 \pi} \int_{0}^{2 \pi} h\left(R,\Phi|r_{0},\phi_{0}\right) \, d\phi_0 
 \nn
 \\
 &\geq& 0,
  \label{eq:KKT} 
\end{IEEEeqnarray}
for all $r_0 \geq 0$, with
equality if $r_0$ is a point of increase
of $F_{0}^*$.

\item[Joint peak amplitude and average cost constraint.]

The KKT condition in this case is same as (\ref{eq:KKT}), but 
for all $0 \leq r_0 \leq \rho$.  

\end{description}

\subsection{Discreteness of optimal input}
Following Smith~\cite{smith1971ica}, we apply the identity theorem~\cite[Thm.~10.26]{Sil} to the KKT conditions.
The first part of the argument is to show that $ \text{LHS}_{\rho}(r_0)$ and  $\text{LHS}_{A}(r_0)$ in \eqref{eq:KKT1}--\eqref{eq:KKT}
can be analytically extended from $r_0\in \Reals^+_0$  to an open connected region $\mathcal{O} _\delta$, $\delta >0$,  
in the complex plane. This is established in Lemma~\ref{lem:analyticity} in Appendix~\ref{app:analyticity}. The second part of the argument consists in deriving the optimality of discrete inputs by way of contradiction in the identity theorem. 
\begin{description}

\item[Peak amplitude constraint.]

Suppose that the points of increase of $F_0^*$ have an accumulation point in the interval $[0,\rho]$. 
Then, since $\text{LHS}_{\rho}$($z$) is analytic on $\mathcal O _\delta$, it is identically zero on
$\mathcal O _\delta$ from the identity theorem \cite[Thm.~10.26]{Sil}. 
Therefore the KKT condition (\ref{eq:KKT1}) is satisfied with
equality for all $r_0 \geq 0$. 

Using the upper bound on $\text{LHS}_{\rho}(r_0)$ in Lemma~\ref{lemupplhs} in 
Appendix~\ref{app1}, there exists $K > 0$ such that 
\begin{IEEEeqnarray}{rCl}
 0 &=& \text{LHS}_{\rho}(r_0) \nn\\ 
 &\leq&  C + \ln \left(\frac{1}{K}\right) + \frac{r_0^2}{\sigma^2 \mathcal{L}}- \ln\left(1-\xi(r_0)\right) + \left(\rho - (1-\epsilon)r_0\right)\sqrt{\frac{\pi}{\sigma^2 \mathcal{L}}} \text{L}_{\frac{1}{2}}\left(-\frac{r_0^2}{\sigma^2 \mathcal{L}}\right), 
 \label{acontrad} 
\end{IEEEeqnarray}
where $\text{L}_{\frac{1}{2}}(\cdot)$ is a Laguerre polynomial, $\xi(r_0) \rightarrow 0$ as $r \rightarrow +\infty$, 
and $\epsilon\in (0,1)$. Since
 $\text{L}_{\lambda}(x) \equiv
 \frac{|x|^{\lambda}}{\Gamma(1+\lambda)}$ as $x \rightarrow
 -\infty$~\cite{abra1964}, we have  
 \begin{equation}
 \lim_{r_0 \rightarrow +\infty}\frac{r_0 \sqrt{\frac{\pi}{\sigma^2 \mathcal{L}}} \text{L}_{\frac{1}{2}}\left(-\frac{r_0^2}{\sigma^2 \mathcal{L}}\right)}{r_0^2} = \frac{2}{\sigma^2 \mathcal{L}}.
 \label{idenlag}
 \end{equation}
Dividing \eqref{acontrad} by $r_0^2>0$ and taking the limit as $r_0 \rightarrow +\infty$, we obtain
 \begin{equation}
 0 \leq \frac{1-2(1-\epsilon)}{\sigma^2 \mathcal{L}},\quad \forall
 \epsilon\in (0,1),
 \label{contrapr}
 \end{equation}
 where we used the fact that $\lim_{r_0
   \rightarrow +\infty}\xi(r_0) = 0$ (see Lemma~\ref{lembds}). Since (\ref{contrapr}) holds
 for any $0 < \epsilon < 1$, choosing $\epsilon < \frac{1}{2}$ yields
 a contradiction. 
 
 It follows that the points of increase of $F^*_{0}$ are isolated. Since the interval $[0,\rho]$ is bounded, 
 there are a finite number of mass points.

\item[Average cost constraint.]

Consider the analytic extension $\text{LHS}_A(z)$ of $\text{LHS}_A(r_0)$ in $\mathcal O _\delta$ and 
suppose that $F_0^*$ has an infinite number of points of increase in a
bounded interval in $\Reals^{+}_0$. 
Then, from the identity theorem, $\text{LHS}_A(z)$ is identically zero on $\mathcal{O}_\delta$. 
 As a result, the KKT condition \eqref{eq:KKT} is satisfied with equality for all $r_0 \geq 0$ and we have
\begin{IEEEeqnarray}{rCl}
 0 &=& \text{LHS}_{A}(r_0)
 \nn
 \\
 &>& \nu (\mathcal{C}(r_0) - A) + C + \ln\left(\frac{k_1}{2 \pi k_u}\right) + \frac{1}{\sigma^2 \mathcal{L}} r_0^2  - r_0 \sqrt{\frac{\pi}{\sigma^2 \mathcal{L}}} \text{L}_{\frac{1}{2}}\left(-\frac{r_0^2}{\sigma^2 \mathcal{L}}\right), 
 \label{contrad}
 \end{IEEEeqnarray}
 where \eqref{contrad} follows from the lower bound on $\text{LHS}_A(r_0)$ in Lemma~\ref{lemlowlhs} in 
 Appendix \ref{app1}. Dividing \eqref{contrad} by $r_0^2>0$ and taking the limit $r_0 \rightarrow +\infty$ gives
\begin{equation}
\label{eqlimub}
\nu \lim_{r_0 \rightarrow +\infty} \frac{\mathcal{C}(r_0)}{r_0^2} - \frac{1}{\sigma^2 \mathcal{L}}  < 0,
\end{equation}
 where we used \eqref{idenlag}. The inequality \eqref{eqlimub} is impossible for $\mathcal{C}(r_0) = \omega(r^2_0)$,
 unless $\nu = 0$ which is ruled out in \cite[Lemma 5]{fahs2017it}.
 
We now prove that the number of mass points is finite. Suppose that $F_0^*$ has an 
infinite number of points of increase
with only a finite number of them in any bounded interval in
$\Reals^{+}_0$. Then, the points of increase tend
to infinity. We establish in Lemma~\ref{lemlowlhs} in Appendix~\ref{app1} a lower bound on $\text{LHS}_{A}(r_0)$ which diverges to $+\infty$ as $r_0 \rightarrow +\infty$. This implies that $ \text{LHS}_{A}(r_0) \neq 0$ at large values of  $r_0$ which contradicts the possibility of having arbitrarily large 
mass points. It follows that  $F_0^*$ has a finite number of isolated points of increase in $\Reals^{+}_0$.
  
The above proof immediately generalize to
 the case with multiple average cost constraints, described in Remark \ref{rem:multiple}.

\item[Joint average cost and peak amplitude constraints.]

Suppose that $F_0^*$ has an infinite number of points of increase in 
$[0,\rho]$. Then the KKT condition \eqref{eq:KKT} implies that
\begin{equation}
\text{LHS}_{A}(r_0) = 0, \quad \forall r_0 
\geq 0.
\end{equation}
Thus \eqref{eqlimub} holds true and 
\begin{equation}
\label{eqlimub1}
\nu \lim_{r_0 \rightarrow +\infty} \frac{\mathcal{C}(r_0)}{r_0^2} < \frac{1}{\sigma^2 \mathcal{L}}.
\end{equation}

On the other hand, since the support of $F_0^*$ is restricted to
$[0,\rho]$, we can apply the upper bound in Lemma~\ref{lemupplhs} to the KKT condition \eqref{eq:KKT} to obtain
\begin{align*}
 \nu \left(\mathcal{C}(r_0) - A\right) + C  + \ln \left(\frac{1}{K}\right) + \frac{r_0^2}{\sigma^2 \mathcal{L}} -  \ln\left(1-\xi(r_0)\right) \nonumber + \left(\rho - (1-\epsilon)r_0\right)\sqrt{\frac{\pi}{\sigma^2 \mathcal{L}}} \text{L}_{\frac{1}{2}}\left(-\frac{r_0^2}{\sigma^2 \mathcal{L}}\right)\geq 0.
\end{align*}
Dividing by $r^2_0>0$ and letting $r_0 \rightarrow +\infty$ gives
\begin{equation*}
\nu \lim_{r_0 \rightarrow +\infty} \frac{\mathcal{C}(r_0)}{r_0^2} \geq
\frac{2(1-\epsilon)-1}{\sigma^2 \mathcal{L}},\quad \forall \epsilon\in(0,1).
\end{equation*}
Taking the limit $\epsilon \rightarrow 0$
\begin{equation}
\label{eqlimlb}
\nu \lim_{r_0 \rightarrow +\infty} \frac{\mathcal{C}(r_0)}{r_0^2} \geq \frac{1}{\sigma^2 \mathcal{L}}.
\end{equation}
Equations~(\ref{eqlimub1}) and~(\ref{eqlimlb}) establish a contradiction. 
It follows that $F_0^*$ has a finite number of isolated points of increase in $[0,\rho]$.
\end{description}


\section{Conclusions}
\label{sec:con}

We studied the capacity-achieving input distribution in the per-sample zero-dispersion model of the optical fiber, subject
to a peak amplitude, an average cost, or a joint cost constraint. 
We proved that the capacity-achieving input distribution in this channel is unique, has a uniform phase 
that is independent of the amplitude, and the distribution of the amplitude is discrete with a finite number of mass points. In other words, multi-ring modulation formats commonly used in optical communication can achieve channel capacity---potentially with non-uniform ring spacing and probabilities.
Whether such constellations are optimal for the dispersive optical fiber is an interesting open problem.


\appendices

\section{Continuity of the Mutual Information}
\label{app:cont}

In this appendix we prove that $I(X;Y)$ in the channel \eqref{condpdf} is continuous in $F_{X} \in \mathcal{F}$.
First, we establish upper and lower bounds on the conditional pdf in Lemma~\ref{lembds}, upper and lower bounds on the output pdf in Lemma~\ref{lem:outbds}, and an upper bound on the conditional entropy density in Lemma~\ref{lemm:entropy-density}. 
These bounds are then used to prove the continuity of the conditional entropy in Lemma~\ref{thm:cont-cond-entr} and the continuity of the output entropy in Lemma~\ref{thm:cont-out-entr}. 
Lemmas~\ref{thm:cont-cond-entr} and~\ref{thm:cont-out-entr} yield the desired result.

Note that, since the function $i(p)=-p\ln(p)$ has opposite signs when $0<p<1$ and $p>1$, to upper-bound entropies,
we need both upper and lower bounds on the probability distributions.

We begin by recalling a few properties of the modified Bessel function.

\begin{lemma}[Bounds on the Bessel Functions]
\label{lem:bessel}
The modified Bessel functions of the first kind $I_m(x)$ satisfy the following properties:
\begin{enumerate}
\item $I_0(x)\geq 1$ when $x\geq 0$, with equality iff $x=0$. Furthermore,  $I_m(x)\geq 0$ when $x\geq 0$,
$m\in\Naturals_0$;
\item $|I_m(z)| \leq |I_0(\Re(z))|$ when $z \in \Complex$, $m \in \Naturals_0$. Furthermore, $I_m(x)\leq e^{|x|}$ when $x\in\Reals$, $m\in\Naturals_0$;

    \item If $m \in \Naturals_0$ and  $z \rightarrow 0$
\begin{eqnarray*}
I_m(z) \equiv \frac{\left(\frac{1}{2} z\right)^m}{m!};
\label{eq:Im<I0}
\end{eqnarray*}

\item For any $0 < \epsilon < 1$, there exists a $K > 0$ such that $I_0(x) \geq Ke^{(1-\epsilon)x}$, when $x \geq 0$;

\item $I_m(x)$ is monotonically increasing for $x\geq 0$ and $m\in\Naturals_0$;

\item $I_m(z)$ is an analytic function of $z$ in the entire complex plane for $m \in \mathbb{Z}$. 
\end{enumerate}
\end{lemma}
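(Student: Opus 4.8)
The plan is to read off all six items from the two classical representations of the modified Bessel function of the first kind of integer order. The first is the power series
\begin{equation*}
I_m(z) = \sum_{k=0}^{\infty}\frac{(z/2)^{2k+m}}{k!\,(k+m)!},\qquad m\in\Naturals_0,
\end{equation*}
which, by the ratio test, converges for every $z\in\Complex$. The second is the integral representation
\begin{equation*}
I_m(z) = \frac{1}{\pi}\int_0^{\pi} e^{z\cos\theta}\cos(m\theta)\,d\theta,\qquad m\in\Naturals_0.
\end{equation*}
Items~1, 3, 5 and~6 I would obtain directly from the series; items~2 and~4 I would get from the integral.

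For item~1, when $z=x\geq0$ every summand of the series is non-negative, so $I_m(x)\geq0$, and for $m=0$ the $k=0$ term equals $1$ while the remaining terms are strictly positive once $x>0$, which gives $I_0(x)\geq1$ with equality iff $x=0$. Item~3 is the leading-order behaviour of the series, $I_m(z)=\frac{(z/2)^m}{m!}\bigl(1+O(z^2)\bigr)$, so the ratio of $I_m(z)$ to $(z/2)^m/m!$ tends to $1$ as $z\to0$. For item~5 I would differentiate the series term by term: the derivative is again a series with non-negative coefficients for $x\geq0$ (equivalently, $I_0'=I_1$ and $I_m'=\tfrac12(I_{m-1}+I_{m+1})$ for $m\geq1$, which are non-negative by item~1), so $I_m$ is non-decreasing on $[0,\infty)$. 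Item~6 follows because the series, having infinite radius of convergence, defines an entire function for $m\in\Naturals_0$, and $I_{-m}=I_m$ extends analyticity to all $m\in\integers$.

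For item~2 I would apply the triangle inequality to the integral representation: since $\cos\theta$ is real, $|e^{z\cos\theta}|=e^{\Re(z)\cos\theta}$, and $|\cos(m\theta)|\leq1$, hence $|I_m(z)|\leq\frac{1}{\pi}\int_0^{\pi}e^{\Re(z)\cos\theta}\,d\theta=I_0(\Re(z))$, which is $|I_0(\Re(z))|$ because $I_0$ is non-negative on the reals; specialising to real $z=x$ and using $e^{x\cos\theta}\leq e^{|x|}$ yields $I_m(x)\leq e^{|x|}$. For item~4 I would retain only a narrow angular window about $\theta=0$: putting $\theta_0=\arccos(1-\epsilon)\in(0,\tfrac{\pi}{2})$ we have $e^{x\cos\theta}\geq e^{(1-\epsilon)x}$ for all $\theta\in[0,\theta_0]$ and $x\geq0$, so $I_0(x)\geq\frac{1}{\pi}\int_0^{\theta_0}e^{x\cos\theta}\,d\theta\geq\frac{\theta_0}{\pi}\,e^{(1-\epsilon)x}$, and $K=\theta_0/\pi$ works. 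The only item requiring any thought is item~4, where one must choose how to split the integral so as to trade the constant against the exponent; an alternative route is to quote the asymptotic $I_0(x)\sim e^{x}/\sqrt{2\pi x}$ and patch it with the continuity and positivity of $I_0$ on compact sets, but the windowing argument is self-contained. The remaining items are direct consequences of the two representations, so I do not expect a genuine obstacle.
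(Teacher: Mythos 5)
Your proof is correct, and for most items it matches the paper in substance: the paper also treats 1) and 2) via the integral representation, and simply cites Abramowitz--Stegun for 3), 5) (via the recurrence $I_m'(x)=I_{m+1}(x)+\tfrac{m}{x}I_m(x)$ together with item 1) and 6), whereas you derive these directly from the power series --- an equally valid and slightly more self-contained route. The one genuinely different step is item 4: the paper invokes the asymptotic $I_0(x)\equiv e^{x}/\sqrt{2\pi x}$ as $x\to+\infty$ and then argues that the positive continuous function $e^{(\epsilon-1)x}I_0(x)$, which diverges at infinity, is bounded below by some $K>0$ on $[0,\infty)$; your windowing argument, restricting the integral to $\theta\in[0,\arccos(1-\epsilon)]$ to get $I_0(x)\geq \tfrac{\theta_0}{\pi}e^{(1-\epsilon)x}$ with the explicit constant $K=\theta_0/\pi$, avoids the asymptotic expansion entirely and is more elementary and constructive, at the cost of a less sharp constant (which is irrelevant for the paper's use of the bound). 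One cosmetic remark: in item 5 you conclude ``non-decreasing''; since $I_{m-1}(x)>0$ for $x>0$, your own recurrence $I_m'=\tfrac12(I_{m-1}+I_{m+1})$ (or $I_0'=I_1$) immediately upgrades this to strictly increasing on $(0,\infty)$, matching the statement.
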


\begin{proof}
The proof of the well-known inequalities in 1) and 2) is straightforward using the integral definition of  
$I_m(x)$, $m \in \Naturals_0$~\cite[Prop. 9.6.19]{abra1964}. Property 3) can be found in~\cite[Prop. 9.6.7]{abra1964}. For the inequality in 4), note that
$e^{(\epsilon-1) x} I_0(x) \equiv \frac{e^{\epsilon x}}{\sqrt{2 \pi x}}$ as $x\rightarrow +\infty$~\cite[Prop. 9.7.1]{abra1964}. Therefore the
positive and continuous function $e^{(\epsilon-1) x} I_0(x)$ is lower bounded by some $K>0$. The proof of 5) 
is due to 1) in this Lemma, and because $dI_m(x)/dx = I_{m+1}(x) + \frac{m}{x} I_m(x)$~\cite[Prop. 9.6.26]{abra1964}. Property 6) can be found in~\cite[Prop. 9.6.1]{abra1964}.
\end{proof}

\begin{lemma}[Bounds on the Conditional pdf]
\label{lembds}

 The conditional pdf \eqref{condpdf} satisfies the following inequalities.

\begin{enumerate}
\item Upper bound:
  \begin{IEEEeqnarray}{rCl}
    p(r,\phi | r_0,\phi_0) < k_{u}p_{R|R_0}(r|r_0),
    \label{eq:cond-pdf-ub}
  \end{IEEEeqnarray}
where 
\begin{eqnarray}
k_{u} \defeq \frac{1}{2 \pi} \left(1+\sqrt{2} \sum^{+\infty}_{m =1} \frac{ \beta_{m} }{\sinh(\beta_{m})}\right) < \infty,
\label{eq:ku}
\end{eqnarray} 
in which $\beta_{m} \defeq \sqrt{\frac{m \gamma}{2}} \sigma \const{L}$, $m > 0$.
The conditional pdf of the amplitude is upper bounded as
\begin{IEEEeqnarray}{rCl}
p_{R|R_0}(r|r_0)&\leq&
\frac{2r}{\sigma^2\const{L}}e^{-\frac{(r-r_0)^2}{\sigma^2\const{L}}}
\label{eq:p(r|r0)-up1}
\\
&\leq&
\frac{2r}{\sigma^2\const{L}}.
\label{eq:p(r|r0)-up2}
\end{IEEEeqnarray}

\item Lower bound:
  \begin{IEEEeqnarray}{rCl}
p(r,\phi | r_0,\phi_0) \geq \frac{1}{2 \pi}
p_{R|R_0}(r|r_0)\left(1-\xi(r_0)\right),
\label{condpdflbf}
  \end{IEEEeqnarray}
where
\begin{equation}
\xi(r_0) = \sqrt{2}e^{-\left(\Re(a_1)-\frac{1}{\sigma^2 \mathcal{L}}\right) r_0^2}\sum^{+\infty}_{m =1} \frac{\beta_{m} }{\sinh(\beta_{m})}.
\label{eq:zeta-r0}
\end{equation}
We have $\xi(r_0) \rightarrow 0$ as $r_0 \rightarrow +\infty$.

\end{enumerate}
\end{lemma}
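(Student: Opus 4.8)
Both bounds reduce to a single term\nobreakdash-by\nobreakdash-term estimate of the series in \eqref{condpdf}: since $\bigl|\Re\bigl(C_m(r,r_0)e^{jm(\phi-\phi_0-\gamma r_0^2\const{L})}\bigr)\bigr|\le|C_m(r,r_0)|$ for every $m$, it is enough to compare $|C_m(r,r_0)|$ with $p_{R|R_0}(r|r_0)$ and then to sum the resulting coefficients. The first step is to put the real parts of $a_m$ and $b_m$ in a usable form. Writing $w\defeq\beta_m(1+j)$ one has $\sqrt{jm\gamma\sigma^2}\const{L}=w$, $w^2=2j\beta_m^2$, $a_m=w\coth(w)/(\sigma^2\const{L})$ and $b_m=w/(\sigma^2\const{L}\sinh w)$. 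From the Mittag--Leffler expansion $\coth w=\tfrac1w+\sum_{n\ge1}\tfrac{2w}{w^2+n^2\pi^2}$ one obtains
\[
\Re(a_m)=\frac{1}{\sigma^2\const{L}}\Bigl(1+\sum_{n\ge1}\frac{8\beta_m^4}{n^4\pi^4+4\beta_m^4}\Bigr),
\]
so that $\Re(a_m)>1/(\sigma^2\const{L})$ for every $m\ge1$ and $\Re(a_m)$ is strictly increasing in $\beta_m$, hence in $m$. Moreover $|\sinh w|^2=\sinh^2\beta_m+\sin^2\beta_m=\tfrac12(\cosh 2\beta_m-\cos 2\beta_m)=\sum_{k\text{ odd}}\tfrac{(2\beta_m)^{2k}}{(2k)!}\ge 2\beta_m^2$, with equality only at $\beta_m=0$; since $|w|=\sqrt2\,\beta_m$ this gives $|b_m|=\sqrt2\,\beta_m/\bigl(\sigma^2\const{L}\sqrt{\sinh^2\beta_m+\sin^2\beta_m}\,\bigr)<1/(\sigma^2\const{L})$ as well as $|b_m|\le\sqrt2\,\beta_m/(\sigma^2\const{L}\sinh\beta_m)$, and in particular $|\Re(b_m)|\le|b_m|<1/(\sigma^2\const{L})$.

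With these facts the pointwise bound is immediate: $\bigl|e^{-a_m(r^2+r_0^2)}\bigr|=e^{-\Re(a_m)(r^2+r_0^2)}$, and combining the Bessel inequality $|I_m(z)|\le|I_0(\Re(z))|=I_0(|\Re(z)|)$ from Lemma~\ref{lem:bessel}(2) with $|\Re(b_m)|<1/(\sigma^2\const{L})$ and the monotonicity of $I_0$ (Lemma~\ref{lem:bessel}(5)) yields
\[
|C_m(r,r_0)|\le r|b_m|\,e^{-\Re(a_m)(r^2+r_0^2)}\,I_0\!\Bigl(\frac{2rr_0}{\sigma^2\const{L}}\Bigr)\le\frac{\sigma^2\const{L}|b_m|}{2}\,p_{R|R_0}(r|r_0),
\]
the last step using $\Re(a_m)\ge1/(\sigma^2\const{L})$. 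Since $\sigma^2\const{L}|b_m|/2\le\beta_m/(\sqrt2\,\sinh\beta_m)$, summing over $m\ge1$ and adding the $m=0$ term of \eqref{condpdf} gives \eqref{eq:cond-pdf-ub} with the constant $k_u$ of \eqref{eq:ku}; this $k_u$ is finite because $\beta_m=\beta_1\sqrt m$ makes $\sum_m\beta_m/\sinh\beta_m$ converge, and the inequality is strict for $r>0$ since $\Re(a_m)>1/(\sigma^2\const{L})$ is strict. The auxiliary bounds \eqref{eq:p(r|r0)-up1}--\eqref{eq:p(r|r0)-up2} follow from \eqref{condR} by inserting $I_0(x)\le e^{|x|}$ (Lemma~\ref{lem:bessel}(2)) and recognizing $-(r^2+r_0^2)+2rr_0=-(r-r_0)^2$.

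For the lower bound one keeps the surplus exponential decay: write $e^{-\Re(a_m)(r^2+r_0^2)}=e^{-(\Re(a_m)-1/(\sigma^2\const{L}))(r^2+r_0^2)}e^{-(r^2+r_0^2)/(\sigma^2\const{L})}$ and bound the first factor by $e^{-(\Re(a_1)-1/(\sigma^2\const{L}))r_0^2}$, which is legitimate because $\Re(a_m)\ge\Re(a_1)>1/(\sigma^2\const{L})$. Repeating the summation above then gives $\tfrac{1}{\pi}\sum_{m\ge1}|C_m(r,r_0)|\le\tfrac{1}{2\pi}\xi(r_0)\,p_{R|R_0}(r|r_0)$ with $\xi(r_0)$ exactly the expression in \eqref{eq:zeta-r0}, whence
\[
p(r,\phi|r_0,\phi_0)\ge\frac{1}{2\pi}p_{R|R_0}(r|r_0)-\frac{1}{\pi}\sum_{m\ge1}|C_m(r,r_0)|\ge\frac{1}{2\pi}p_{R|R_0}(r|r_0)\bigl(1-\xi(r_0)\bigr),
\]
and $\xi(r_0)\to0$ as $r_0\to+\infty$ because $\Re(a_1)>1/(\sigma^2\const{L})$.

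The only nonroutine ingredient is the first step, namely the three positivity/monotonicity facts $\Re(a_m)>1/(\sigma^2\const{L})$, $\Re(a_m)$ nondecreasing in $m$, and $|b_m|\le1/(\sigma^2\const{L})$; without them the term\nobreakdash-by\nobreakdash-term comparison with $p_{R|R_0}$ breaks down. All three become transparent through the Mittag--Leffler representation of $\coth$ (equivalently, through the power series $\cosh 2\beta_m-\cos 2\beta_m=2\sum_{k\text{ odd}}(2\beta_m)^{2k}/(2k)!$), and everything after that is bookkeeping with the Bessel estimates collected in Lemma~\ref{lem:bessel}.
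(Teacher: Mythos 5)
Your proof is correct and follows essentially the same route as the paper: the term-by-term bound $|\Re(C_m e^{j\cdot})|\le |C_m|$, the comparison of each $|C_m|$ with $p_{R|R_0}$ via $\Re(a_m)\ge\Re(a_1)>1/(\sigma^2\mathcal{L})$ and $|b_m|\le \sqrt{2}\beta_m/(\sigma^2\mathcal{L}\sinh\beta_m)$, summation to obtain $k_u$ and $\xi(r_0)$, and $I_0(x)\le e^{|x|}$ for \eqref{eq:p(r|r0)-up1}--\eqref{eq:p(r|r0)-up2}. The only difference is cosmetic but welcome: where the paper asserts the monotonicity of $t(\beta_m)$ and bounds $\Re(b_m)$ by an explicit formula, you derive the same facts self-containedly via the Mittag--Leffler expansion of $\coth$ and the series identity for $\sinh^2\beta_m+\sin^2\beta_m$, using $|\Re(b_m)|\le|b_m|$ in place of the separate $\Re(b_m)$ estimate.
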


\begin{proof}

We apply the following inequalities to the conditional pdf \eqref{condpdf}
\begin{align}
\label{eq:up-low-cm}
-\left|C_{m}(r,r_0)\right| 
\leq
\Re\left(C_{m}(r,r_0) e^{jm(\phi-\phi_0-\gamma r_0^2\const{L})}\right) 
\leq
\left|C_{m}(r,r_0)\right|,
\end{align}
where the expression of $C_m(r,r_0)$ is given by equation \eqref{coeffC}. First, we upper-bound $C_m(r,r_0)$ as follows
\begin{IEEEeqnarray}{rCl}
\bigl|C_m(r,r_0)\bigr|&=&\left| r b_me^{-a_m(r^2+r_0^2)}I_{m}\left(2b_mr_0r\right)\right|
\nn
\\
&\leq&
 r|b_m|e^{-\Re(a_m)(r^2 + r_0^2)}I_0\left(2\Re(b_m)r_0r\right). 
\label{eq:cm-ub}
\end{IEEEeqnarray}
We upper-bound $|b_m|$ and $\Re(b_m)$, and lower-bound $\Re(a_m)$, in \eqref{eq:cm-ub}.

Using the expression for $a_m$ in (\ref{coeffa}), we have $\Re(a_m) \defeq \frac{1}{\sigma^2 \mathcal{L}}t(\beta_m)$ where $\beta_{m}\defeq \sqrt{\frac{m \gamma}{2}} \sigma \mathcal{L}$, $m > 0$, and where
\begin{eqnarray}
t(x) = \frac{x \left(\sinh(2x) + \sin(2x)\right)} {2\left(\sinh^2(x) + \sin^2(x)\right)}. \label{teq}
\end{eqnarray}
It can be verified that $t(x)$ is increasing for $x > 0$ and $\lim_{x \rightarrow 0} t(x) = 1$. Thus, 
$t(\beta_m)\geq t(\beta_1)>t(\beta_0)=1$ when $m\geq 1$, and we obtain  two lower bounds
\begin{eqnarray}
\Re(a_m) &\geq&
\Re(a_1)
\label{lwbnda1}
\\
&>&
\frac{1}{\sigma^2 \mathcal{L}}.
\label{lwbda}
\end{eqnarray}

Similarly, using the expression for $b_m$ in (\ref{coeffb}), we obtain
\begin{eqnarray}
\Re(b_m) &=& \frac{\beta_{m}}{\sigma^2\mathcal{L}}\frac{\sinh(\beta_{m}) \cos(\beta_{m})+ \text{cosh}(\beta_{m})\sin(\beta_{m})}{\sinh^2(\beta_{m}) + \sin^2(\beta_{m})}
\nonumber
\\
&\leq& \frac{1}{\sigma^2 \mathcal{L}}, \label{unibdb} 
\end{eqnarray}
and 
\begin{IEEEeqnarray}{rCl}
|b_m| &=& \frac{\sqrt{2} \beta_{m}}{\sigma^2 \const{L} \sqrt{\sinh^2(\beta_{m}) + \sin^2(\beta_{m})}} \hspace{-0.05cm} 
\nonumber
\\
&\leq& \frac{\sqrt{2} \beta_{m} }{\sigma^2 \const{L} \sinh(\beta_{m})}.
\label{unibdb1}
\end{IEEEeqnarray}

Substituting \eqref{lwbnda1}--\eqref{unibdb1} into \eqref{eq:cm-ub}, we obtain two
upper bounds on $C_m(r,r_0)$
\begin{IEEEeqnarray}{rCl}
|C_m(r,r_0)|&\leq&
\frac{\sqrt{2} r}{\sigma^2 \const{L}}
\frac{ \beta_{m} }{\sinh(\beta_{m})} 
e^{-\Re(a_1)(r^2 + r_0^2)}  I_0\left(\frac{2 r_0r}{\sigma^2 \mathcal{L}}\right)
\label{eq:cm-ub-1}
\\
&< &
\frac{\sqrt{2} r}{\sigma^2 \const{L}}
\frac{ \beta_{m} }{\sinh(\beta_{m})} 
e^{-\frac{r^2 + r_0^2}{\sigma^2\const{L}}}  I_0\left(\frac{2 r_0r}{\sigma^2 \mathcal{L}}\right).
\label{eq:cm-ub-2}
\end{IEEEeqnarray}

\emph{Upper Bound.} 

Applying the upper bound in \eqref{eq:up-low-cm} to the conditional pdf \eqref{condpdf} and using the 
second upper bound on $C_m(r,r_0)$ in \eqref{eq:cm-ub-2}
\begin{IEEEeqnarray*}{rCl}
p(r,\phi | r_0,\phi_0) &\leq& 
\frac{1}{2 \pi} p_{R|R_0}(r|r_0) + \frac{1}{\pi}\sum^{+\infty}_{m =1}\left|C_{m}(r,r_0)\right| 
\nonumber\\
&<&  
\frac{1}{2 \pi} p_{R|R_0}(r|r_0)  +  \frac{1}{\pi}\frac{\sqrt{2} r}{\sigma^2 \const{L}} e^{-\frac{(r^2 + r_0^2)}{\sigma^2 \mathcal{L}}}I_0\left(\frac{2 r_0r}{\sigma^2 \mathcal{L}}\right)\sum^{+\infty}_{m =1} \frac{ \beta_{m} }{\sinh(\beta_{m})}\nonumber\\
\\
&=& k_{u}p_{R|R_0}(r|r_0),
\end{IEEEeqnarray*}
where $k_u$ is defined in \eqref{eq:ku}. It can be verified that $k_u<\infty$.

The upper bound on $p_{R|R_0}(r|r_0)$ follows from applying the inequality in Lemma~\ref{lem:bessel}-2 to the
conditional pdf \eqref{condR}.

\vspace{1mm}
\emph{Lower Bound.}

Applying the lower bound in \eqref{eq:up-low-cm} to the conditional pdf \eqref{condpdf} and using the 
first upper bound on $C_m(r,r_0)$ in \eqref{eq:cm-ub-1}
\begin{align}
p(r,\phi | r_0,\phi_0) &\geq \frac{1}{2 \pi} p_{R|R_0}(r|r_0) - \frac{1}{\pi}\sum^{+\infty}_{m =1}\left|C_{m}(r,r_0)\right| \nn\\
&\geq
\frac{1}{2 \pi} p_{R|R_0}(r|r_0)  - \frac{1}{\pi}\frac{\sqrt{2} r}{\sigma^2 \const{L}} e^{-\Re(a_1)(r^2 + r_0^2)}  I_0\left(\frac{2 r_0r}{\sigma^2 \mathcal{L}}\right)\sum^{+\infty}_{m =1} \frac{ \beta_{m} }{\sinh(\beta_{m})} \label{newnd}\\
&\overset{(a)}{\geq} \frac{p_{R|R_0}(r|r_0)}{2 \pi}  \hspace{-0.03cm}\left( \hspace{-0.03cm}1 \hspace{-0.03cm}- \hspace{-0.03cm}\sqrt{2}e^{-\left(\Re(a_1)-\frac{1}{\sigma^2 \mathcal{L}}\right) r_0^2} \hspace{-0.03cm}\sum^{+\infty}_{m =1}  \hspace{-0.03cm}\frac{\beta_{m} }{\sinh(\beta_{m})} \hspace{-0.03cm}\right)
\nonumber
\\
&= \frac{1}{2 \pi} p_{R|R_0}(r|r_0)\left(1-\xi(r_0)\right), 
\nn
\end{align}
where $\xi(r_0)$ is defined in \eqref{eq:zeta-r0}. Step $(a)$ follows from \eqref{condR} and 
\begin{IEEEeqnarray*}{rCl}
e^{-\left(\Re(a_1)-\frac{1}{\sigma^2 \mathcal{L}}\right) (r^2+r_0^2)}\leq e^{-\left(\Re(a_1)-\frac{1}{\sigma^2 \mathcal{L}}\right) r_0^2},
\end{IEEEeqnarray*}
which holds because, from \eqref{lwbnda1}--\eqref{lwbda}, $\Re(a_1) > \frac{1}{\sigma^2\const{L}}$.
Finally, since $\Re(a_1) > \frac{1}{\sigma^2\const{L}}$, $\xi(r_0)\rightarrow 0$ as $r_0\rightarrow +\infty$.

\end{proof}

\begin{lemma}[Bounds on the Output pdf]
\label{lem:outbds}

Consider the conditional pdf \eqref{condpdf}.  Let $F_0(r_0,\phi_0)\defeq F_X(r_0, \phi_0)$ be an input cdf
and denote by $p(r,\phi;F_0)$ the corresponding output pdf:
\begin{equation*}
   p(r,\phi; F_0) = \int p(r,\phi|r_0,\phi_0) dF_0(r_0,\phi_0).
\end{equation*}
\begin{enumerate}
    \item If $F_{R_0}(r_0) \in \mathcal{P}$, then 
    \begin{eqnarray*}
    p(r,\phi; F_0) \leq \frac{2k_{u}r}{\sigma^2\mathcal{L}} e^{-\frac{r^2-2 r \rho}{\sigma^2 \mathcal{L}}},
    \end{eqnarray*}
    where recall that $\rho$ is the peak amplitude defined in \eqref{acos}.
   \item If $F_{R_0}(r_0) \in \mathcal{A}$, then
   \begin{eqnarray}
   p(r, \phi; F_0) \leq \frac{2k_{u}r}{\sigma^2\mathcal{L}} \left(e^{-\frac{r^2}{4 \sigma^2 \mathcal{L}}} + \frac{A}{\mathcal{C}\left(\frac{r}{2}\right)}\right),
   \label{coscon}
   \end{eqnarray}
where $k_u$ is defined in \eqref{eq:ku}, and recall that $A$ is the average cost defined in \eqref{ccos}.

\item If $F_{R_0}(r_0)\in\mathcal{F}$, then for large values of $r$
\begin{IEEEeqnarray*}{rCl}
p(r,\phi;F_0) \geq \frac{k_1 r}{\pi \sigma^2\mathcal{L}} e^{-\frac{r^2}{\sigma^2 \mathcal{L}}}\left(1- \xi(r)\right),   
\end{IEEEeqnarray*}
where $\xi(\cdot)$ is defined in Lemma~\ref{lembds} and $k_1 = \int_{0}^{+\infty} e^{-\frac{r_0^2}{\sigma^2 \mathcal{L}}}\,dF_{R_0}(r_0)$. Recall that $\mathcal F$ is any of the sets $\mathcal P$, $\mathcal A$ or 
    $\mathcal P\cap\mathcal A'$ defined in Section \ref{sec:main}. Furthermore, for $r \geq 0$

\begin{IEEEeqnarray*}{rCl}
p(r;F_{R_0}) \geq \frac{2 k_1 r}{\sigma^2\mathcal{L}} e^{-\frac{r^2}{\sigma^2 \mathcal{L}}},
\end{IEEEeqnarray*}
where $p(r;F_{R_0}) =  \int p_{R|R_{0}}(r|r_0)dF_{R_0}(r_0)$ and 
\end{enumerate}
\end{lemma}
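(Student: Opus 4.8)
The plan is to obtain all four inequalities by integrating, against the input cdf $F_0$, the pointwise estimates of Lemma~\ref{lembds} (for $p(r,\phi|r_0,\phi_0)$ and $p_{R|R_0}(r|r_0)$) together with the Bessel bounds of Lemma~\ref{lem:bessel}. Since none of those pointwise bounds involves the phases $\phi$ or $\phi_0$, the phase integration is immediate and each item reduces to an estimate of an integral over $r_0$ only.

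\emph{Parts 1 and 2.} Combining the upper bounds \eqref{eq:cond-pdf-ub} and \eqref{eq:p(r|r0)-up1} gives $p(r,\phi|r_0,\phi_0) < \frac{2 k_u r}{\sigma^2\mathcal{L}}\, e^{-(r-r_0)^2/(\sigma^2\mathcal{L})}$, which I integrate against $dF_0(r_0,\phi_0)$. For Part 1, $r_0\le\rho$ forces $(r-r_0)^2\ge r^2-2rr_0\ge r^2-2r\rho$ (using $r\ge 0$), so the exponential is at most $e^{-(r^2-2r\rho)/(\sigma^2\mathcal{L})}$ and can be pulled out of an integral whose remaining mass is $1$. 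For Part 2, I split the $r_0$-integral at $r_0=r/2$: on $\{r_0\le r/2\}$ we have $(r-r_0)^2\ge r^2/4$, contributing at most $\frac{2k_u r}{\sigma^2\mathcal{L}}e^{-r^2/(4\sigma^2\mathcal{L})}$; on $\{r_0>r/2\}$ I bound the exponential by $1$ and apply a Markov-type inequality — by monotonicity of $\mathcal{C}$ (condition C1), $\mathcal{C}(r_0)\ge\mathcal{C}(r/2)$ there, whence $\int_{r_0>r/2}dF_{R_0}\le \mathcal{C}(r/2)^{-1}\int\mathcal{C}(r_0)\,dF_{R_0}\le A/\mathcal{C}(r/2)$. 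Summing the two contributions yields \eqref{coscon} (for $r$ so small that $\mathcal{C}(r/2)=0$ the bound is vacuous).

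\emph{Part 3.} The bound on $p(r;F_{R_0})$ follows by integrating \eqref{condR} against $dF_{R_0}$ and using $I_0(x)\ge 1$ (Lemma~\ref{lem:bessel}-1) and $e^{-(r^2+r_0^2)/(\sigma^2\mathcal{L})}=e^{-r^2/(\sigma^2\mathcal{L})}e^{-r_0^2/(\sigma^2\mathcal{L})}$, giving exactly $\frac{2k_1 r}{\sigma^2\mathcal{L}}e^{-r^2/(\sigma^2\mathcal{L})}$. For the lower bound on $p(r,\phi;F_0)$, I return to the series \eqref{condpdf}, bound each summand from below by $-|C_m(r,r_0)|$, and insert the first upper bound \eqref{eq:cm-ub-1} on $|C_m|$; integrating against $dF_0$, the leading term gives $\tfrac{1}{2\pi}p(r;F_{R_0})$, while the subtracted tail is controlled by $\tfrac{\sqrt2\, r}{\pi\sigma^2\mathcal{L}}\bigl(\sum_{m\geq1}\beta_m/\sinh\beta_m\bigr)\int e^{-\Re(a_1)(r^2+r_0^2)}\,I_0\bigl(2rr_0/(\sigma^2\mathcal{L})\bigr)\,dF_{R_0}$. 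The decisive manipulation is to write $e^{-\Re(a_1)(r^2+r_0^2)}=e^{-(\Re(a_1)-1/(\sigma^2\mathcal{L}))(r^2+r_0^2)}\,e^{-(r^2+r_0^2)/(\sigma^2\mathcal{L})}$ and to bound the first factor by $e^{-(\Re(a_1)-1/(\sigma^2\mathcal{L}))r^2}$, which is legitimate because $\Re(a_1)>1/(\sigma^2\mathcal{L})$ by \eqref{lwbda}; what then remains in the integrand is $\tfrac{\sigma^2\mathcal{L}}{2r}p_{R|R_0}(r|r_0)$, so the tail integral equals that prefactor times $\tfrac{\sigma^2\mathcal{L}}{2r}p(r;F_{R_0})$. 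Collecting constants, the tail is exactly $\tfrac{1}{2\pi}p(r;F_{R_0})\,\xi(r)$ with $\xi(\cdot)$ the same function as in Lemma~\ref{lembds}, hence $p(r,\phi;F_0)\ge\tfrac{1}{2\pi}p(r;F_{R_0})\bigl(1-\xi(r)\bigr)$; substituting the lower bound on $p(r;F_{R_0})$ just proved gives the stated inequality, meaningful once $r$ is large enough that $\xi(r)<1$ (and trivially true otherwise, since then the right-hand side is $\le 0$).

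\emph{Main obstacle.} The only non-mechanical step is the rewriting in Part 3 that recasts the ``error'' integral as an exponentially small multiple $\xi(r)$ of $p(r;F_{R_0})$ itself; this is precisely what makes the lower bound inherit the correct Gaussian-type decay $e^{-r^2/(\sigma^2\mathcal{L})}$ with a correction uniform over all admissible $F_{R_0}$. Everything else is routine manipulation of the inequalities already established in Lemma~\ref{lembds} and Lemma~\ref{lem:bessel}.
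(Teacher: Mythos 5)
Your proposal is correct and follows essentially the same route as the paper: both integrate the pointwise bounds of Lemma~\ref{lembds} against $F_0$, use the Markov-type tail bound $\int_{r_0>r/2}dF_{R_0}\le A/\mathcal{C}(r/2)$ with the split at $r_0=r/2$ for the average-cost case, and for Part 3 start from the intermediate inequality \eqref{newnd}, factor out $e^{-(\Re(a_1)-1/(\sigma^2\mathcal{L}))r^2}$ using \eqref{lwbda}, and invoke $I_0\ge 1$. Your repackaging of the error term as $\tfrac{1}{2\pi}\xi(r)\,p(r;F_{R_0})$ is just a rephrasing of the paper's steps and correctly records, as the paper does implicitly, that the final substitution needs $\xi(r)<1$, i.e.\ $r$ large.
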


\begin{proof}
We have
\begin{IEEEeqnarray*}{rCl}
     p(r,\phi;F_0) &=& \int p(r,\phi|r_0,\phi_0) dF_0(r_0,\phi_0) \nn\\
     &\leq& k_{u} \int p_{R|R_{0}}(r|r_0)dF_{R_0}(r_0) \nn\\
     &=& k_{u}\,p(r;F_{R_0}),
     \label{prelimres}
\end{IEEEeqnarray*}
where we used the upper bound \eqref{eq:cond-pdf-ub} in
Lemma~\ref{lembds}. We bound $p(r;F_{R_0})$ for the three cases  below.

\emph{Case 1) $F_{R_0}(r_0) \in \mathcal{P}$}. We have
\begin{eqnarray}
p(r; F_{R_0}) & = &  \int p_{R|R_{0}}(r|r_0)dF_{R_0}(r_0)\nn\\
&=& \frac{2r}{\sigma^2\mathcal{L}} \int^{\rho}_{0}e^{-\frac{r^2+r_0^2}{\sigma^2 \mathcal{L}}} I_0\left(\frac{2 r
r_0}{\sigma^2 \mathcal{L}}\right)\,dF_{R_0}(r_0) \nn\\
&\leq&  \frac{2r}{\sigma^2\mathcal{L}} e^{-\frac{r^2-2 r \rho}{\sigma^2 \mathcal{L}}},\label{upperppre}
\end{eqnarray}
where $\rho$ is the peak amplitude in \eqref{acos} and we used the inequality in Lemma~\ref{lem:bessel}-2, as well as
the inequality $(r-r_0)^2\geq r^2-2r\rho$ when $r_0\in [0,\rho]$.

\emph{Case 2) $F_{R_0}(r_0) \in \mathcal{A}$}. 
The average cost constraint upper bounds the tail of the input distribution as follows. For any $a\geq 0$ 
\begin{IEEEeqnarray*}{rCl}
A&\geq& \int_{0}^{+\infty}\mathcal{C}(r_0)\,dF_{R_0}(r_0)
\\
&\geq &
 \int_{a}^{+\infty}\mathcal{C}(r_0)\,dF_{R_0}(r_0)
\\
&\geq &
\mathcal{C}(a)
\int_{a}^{+\infty}dF_{R_0}(r_0),
\end{IEEEeqnarray*}
where we used the properties C1-C2 of the cost function $\mathcal{C}(r_0)$ in Section~\ref{sec:main}.
Therefore
\begin{IEEEeqnarray}{rCl}
\int_{a}^{+\infty}dF_{R_0}(r_0)
&\leq& 
\frac{A}
{\mathcal{C}(a)}.
\label{eq:pdf-r-ub}
\end{IEEEeqnarray}
Note that, since $\mathcal C(a)=\omega(a^2)$, $\mathcal C(a)$ grows faster than $a^2$ as $a\rightarrow\infty$.

We have: 
\begin{eqnarray*}
p(r;F_{R_0}) &=& \frac{2r}{\sigma^2\mathcal{L}} \int^{+\infty}_{0}e^{-\frac{r^2+r_0^2}{\sigma^2 \mathcal{L}}}
I_0\left(\frac{2 r r_0}{\sigma^2 \mathcal{L}}\right)\,dF_{R_0}(r_0) \nn\\
&\overset{(a)}{\leq}&  \frac{2r}{\sigma^2\mathcal{L}} \int^{+\infty}_{0}e^{-\frac{(r-r_0)^2}{\sigma^2 \mathcal{L}}}
\,dF_{R_0}(r_0)\nn\\
&=& \frac{2r}{\sigma^2\mathcal{L}} \left(\int^{\frac{r}{2}}_{0}e^{-\frac{(r-r_0)^2}{\sigma^2 \mathcal{L}}}
\,dF_{R_0}(r_0) +  \int^{+\infty}_{\frac{r}{2}}e^{-\frac{(r-r_0)^2}{\sigma^2 \mathcal{L}}} \,dF_{R_0}(r_0)\right)\nn\\
&\overset{(b)}{\leq}& \frac{2r}{\sigma^2\mathcal{L}} \left(e^{-\frac{r^2}{4 \sigma^2 \mathcal{L}}} +
\int^{+\infty}_{\frac{r}{2}}\,dF_{R_0}(r_0)\right)\nn\\
&\overset{(c)}{\leq}& \frac{2r}{\sigma^2\mathcal{L}} \left(e^{-\frac{r^2}{4 \sigma^2 \mathcal{L}}} + \frac{A}{\mathcal{C}\left(\frac{r}{2}\right)}\right). 
\nn
\end{eqnarray*}
Step $(a)$ follows from Lemma~\ref{lem:bessel}-2. 
Step $(b)$ follows because 
$\exp\left(-\frac{(r-r_0)^2}{\sigma^2 \mathcal{L}}\right)$ increases with $r_0\in [0,\frac{r}{2}]$, and 
$\exp\left(-\frac{(r-r_0)^2}{\sigma^2 \mathcal{L}}\right)\leq 1$. 
Step $(c)$ follows from \eqref{eq:pdf-r-ub} with $a=r/2$.

\emph{Case 3)} We have for large values of $r$
\begin{IEEEeqnarray*}{rCl}
 p(r,\phi;F_0) &=& \int p(r,\phi|r_0,\phi_0) dF_0(r_0,\phi_0)\\
                      &\overset{(a)}{\geq}& \int \left(\frac{1}{2 \pi} p_{R|R_0}(r|r_0)  - \frac{1}{\pi}\frac{\sqrt{2} r}{\sigma^2 \const{L}} e^{-\Re(a_1)(r^2 + r_0^2)}  I_0\left(\frac{2 r_0r}{\sigma^2 \mathcal{L}}\right)\sum^{+\infty}_{m =1} \frac{ \beta_{m} }{\sinh(\beta_{m})}\right) dF_{R_0}(r_0)\\
                      &\overset{(b)}{=}& \frac{r}{\pi \sigma^2 \mathcal{L}} e^{-\frac{r^2}{\sigma^2 \mathcal{L}}}  \int I_0\left(\frac{2 r_0r}{\sigma^2 \mathcal{L}}\right) e^{-\frac{r_0^2}{\sigma^2 \mathcal{L}}} \left(1  -  \sqrt{2}\,e^{-(\Re(a_1)-\frac{1}{\sigma^2 \const{L}})(r^2 + r_0^2)}\sum^{+\infty}_{m =1} \frac{ \beta_{m} }{\sinh(\beta_{m})} \right) dF_{R_0}(r_0)\\
                      &\overset{(c)}{\geq}& \frac{r}{\pi \sigma^2 \mathcal{L}} e^{-\frac{r^2}{\sigma^2 \mathcal{L}}}  \left(1  -  \sqrt{2}\,e^{-(\Re(a_1)-\frac{1}{\sigma^2 \const{L}})r^2}\sum^{+\infty}_{m =1} \frac{ \beta_{m} }{\sinh(\beta_{m})} \right)  \int I_0\left(\frac{2 r_0r}{\sigma^2 \mathcal{L}}\right) e^{-\frac{r_0^2}{\sigma^2 \mathcal{L}}}  dF_{R_0}(r_0)\\
                     &\overset{(d)}{\geq}& \frac{k_1 r}{\pi \sigma^2 \mathcal{L}} e^{-\frac{r^2}{\sigma^2 \mathcal{L}}}  \left(1  -  \xi(r)\right),
\end{IEEEeqnarray*}
where $\xi(\cdot)$ is defined in Lemma~\ref{lembds} and $k_1$ is defined in the statement of the lemma. Step $(a)$ follows from \eqref{newnd} and step $(b)$ follows from \eqref{condR}. Step $(c)$ holds true since $e^{-(\Re(a_1)-\frac{1}{\sigma^2 \const{L}})r_0^2} < 1$ which is true by virtue of \eqref{lwbda}.
Finally, step $(d)$ is justified by virtue of Lemma~\ref{lem:bessel}-1 and the fact that $\xi(r) \rightarrow 0$ as $r \rightarrow +\infty$.
Also we have for $r \geq 0$
\begin{IEEEeqnarray*}{rCl}
p(r;F_{R_0}) &=& \int^{+\infty}_{0}p_{R|R_0}(r|r_0) dF_{R_0}(r_0) \\
&\overset{(a)}{=}& 
\frac{2r}{\sigma^2\mathcal{L}}e^{-\frac{r^2}{\sigma^2
    \mathcal{L}}} 
\int^{+\infty}_{0}
e^{-\frac{r_0^2}{\sigma^2
    \mathcal{L}}} I_0\left(\frac{2 r r_0}{\sigma^2 \mathcal{L}}\right)
\,dF_{R_0}(r_0)\\
&\overset{(b)}{\geq}& \frac{2 k_1 r}{\sigma^2\mathcal{L}} e^{-\frac{r^2}{\sigma^2 \mathcal{L}}},   
\end{IEEEeqnarray*}
Step $(a)$ follows from \eqref{condR}. Step $(b)$
follows from Lemma~\ref{lem:bessel}-1.

\end{proof}

Define the \emph{conditional entropy density}
\begin{equation}
i(r,\phi,r_0,\phi_0)  \defeq 
\begin{cases}
- p(r,\phi|r_0,\phi_0)\ln p(r,\phi|r_0,\phi_0), & 
\textnormal{if} \quad p(r,\phi|r_0,\phi_0)>0,
\\[2pt]
0, &  \textnormal{if} \quad p(r,\phi|r_0,\phi_0)=0.
\end{cases}
\label{eq:entropy-density}
\end{equation}
Let $F_0(r_0,\phi_0)\defeq F_X(r_0, \phi_0)$ be an input cdf and denote by $p(r,\phi;F_0)$ 
the corresponding output pdf, \ie, 
\begin{equation*}
    p(r,\phi;F_0) = \int p(r,\phi|r_0,\phi_0) dF_0(r_0,\phi_0).
\end{equation*}
We prove that $i(r,\phi, r_0, \phi_0)$ is continuous, bounded and its average
 is upper bounded by an integrable function.

\begin{lemma}
\label{lemm:entropy-density}
We have:
\begin{enumerate}
    
\item $i(r,\phi, r_0,\phi_0)$ is continuous and bounded in $(r, \phi, r_0,\phi_0)$, for all 
$r,r_0\geq 0$ and $\phi,\phi_0\in[0,2\pi)$.

\item  For any $F_{R_0}(r_0)\in \mathcal F$, 
\begin{equation*}
\left|\int i(r,\phi,r_0,\phi_0)  \,dF_0(r_0,\phi_0)\right| \leq d(r,\phi),
\end{equation*}
where $d(r,\phi)$ is independent of $F_0(r_0,\phi_0)$ and $\int_{r,\phi}d(r,\phi) dr\,d\phi < \infty$.

\end{enumerate}
\end{lemma}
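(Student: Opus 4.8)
The plan is to prove the two claims separately, deducing everything about the entropy density $i$ from corresponding statements about the conditional pdf $p(r,\phi | r_0,\phi_0)$ together with the elementary behaviour of $t\mapsto -t\ln t$.

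\emph{Claim 1.} I would first show that $p(r,\phi | r_0,\phi_0)$ is continuous on the whole domain and bounded by an absolute constant $p_{\max}$. For continuity, each summand $C_m(r,r_0)e^{jm(\phi-\phi_0-\gamma r_0^2\const L)}$ in \eqref{condpdf} is continuous (indeed analytic, by Lemma~\ref{lem:bessel}-6), and by the bound \eqref{eq:cm-ub-1} on $|C_m(r,r_0)|$ combined with a sharpened Bessel estimate $I_0(x)\le c\,e^{x}/\sqrt{1+x}$ valid for all $x\ge0$ (since $x\mapsto e^{-x}I_0(x)\sqrt{1+x}$ is positive, continuous, and has a finite limit at $+\infty$ by the asymptotics $I_0(x)\equiv e^x/\sqrt{2\pi x}$), one gets $\sup_{r,r_0\ge0}|C_m(r,r_0)|\le c'\,\beta_m/\sinh(\beta_m)$; since $\sum_m\beta_m/\sinh(\beta_m)<\infty$ (cf.\ \eqref{eq:ku}), the Weierstrass M-test yields uniform convergence and hence continuity of $p$. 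For boundedness, from $p<k_u\,p_{R|R_0}(r|r_0)$ (Lemma~\ref{lembds}) and the same sharpened estimate applied to \eqref{condR} one gets $p_{R|R_0}(r|r_0)\le \frac{2c\,r}{\sigma^2\const L}\,e^{-(r-r_0)^2/\sigma^2\const L}/\sqrt{1+2rr_0/\sigma^2\const L}$, whose supremum over $r,r_0\ge0$ is finite (split into $r_0\le r/2$, where the Gaussian factor kills the linear factor, and $r_0>r/2$, where $r/\sqrt{1+2rr_0/\sigma^2\const L}$ is bounded). Then $i=-p\ln p$ is continuous — the only point to check is the removable singularity at $p=0$, handled by $-t\ln t\to0$ as $t\to0^{+}$ and the continuity of $p$ — and $|i|\le\max_{0\le t\le p_{\max}}|t\ln t|<\infty$.

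\emph{Claim 2.} I would use the elementary inequalities $|{-t\ln t}|\le \frac{2}{e}\sqrt t$ on $[0,1]$ and $|{-t\ln t}|\le 2t^{3/2}$ on $[1,\infty)$, and split $\int i\,dF_0$ over $\mathcal G_1=\{(r_0,\phi_0):p(r,\phi|r_0,\phi_0)\le1\}$ and its complement $\mathcal G_2$. On $\mathcal G_1$, I bound $\sqrt{p(r,\phi|r_0,\phi_0)}\le\sqrt{k_u}\,(2r/\sigma^2\const L)^{1/2}e^{-(r-r_0)^2/2\sigma^2\const L}$ via \eqref{eq:p(r|r0)-up1}, integrate over $dF_{R_0}$, and for $F_{R_0}\in\mathcal A$ split $r_0\le r/2$ (where the Gaussian is $\le e^{-r^2/8\sigma^2\const L}$) from $r_0>r/2$ (where $\int_{r/2}^{\infty}dF_{R_0}\le A/\mathcal C(r/2)$ by \eqref{eq:pdf-r-ub}), obtaining a contribution $\le c\,r^{1/2}\bigl(e^{-r^2/8\sigma^2\const L}+A/\mathcal C(r/2)\bigr)$. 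On $\mathcal G_2$, $p>1$ forces $p_{R|R_0}(r|r_0)>1/k_u$, which by the sharpened Bessel estimate confines $\mathcal G_2$, for all large $r$, to $r_0\in(r-\Delta,r+\Delta)$ for a fixed $\Delta$; hence $F_{R_0}(\mathcal G_2)\le A/\mathcal C(r-\Delta)$ and $\int_{\mathcal G_2}|i|\,dF_0\le 2p_{\max}^{3/2}A/\mathcal C(r-\Delta)$. The sum of these bounds is an $F_0$-free majorant $d(r,\phi)$, integrable in $(r,\phi)$ precisely because $\mathcal C(r_0)=\omega(r_0^2)$ turns $r^{1/2}/\mathcal C(r/2)$ into $o(r^{-3/2})$ and $1/\mathcal C(r-\Delta)$ into $o(r^{-2})$. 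For $\mathcal F=\mathcal P$ or $\mathcal F=\mathcal P\cap\mathcal A'$ the amplitude is bounded by $\rho$, so the Gaussian factor alone gives super-polynomial decay in $r$ (equivalently, apply Jensen's inequality and Lemma~\ref{lem:outbds}-1), and integrability is immediate.

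\emph{Main obstacle.} The delicate case is the pure average-cost constraint $\mathcal F=\mathcal A$, where the dominating function must be simultaneously $F_0$-independent and integrable in $(r,\phi)$. The crude bound $p\le k_u\,p_{R|R_0}(r|r_0)\le 2k_ur/\sigma^2\const L$ is useless since it does not decay in $r$; the decay has to be extracted by combining the Gaussian concentration $e^{-(r-r_0)^2/\sigma^2\const L}$ of $p_{R|R_0}$ with the tail estimate $\int_a^{\infty}dF_{R_0}\le A/\mathcal C(a)$. Moreover, the $\sqrt{\,\cdot\,}$ arising from the bound on $-t\ln t$ must be applied \emph{inside} the $r_0$-integral rather than pulled out by Jensen: Jensen would only yield $\sqrt{r/\mathcal C(r/2)}$, which need not be integrable (e.g.\ $\mathcal C(r_0)=r_0^2\ln r_0$ is $\omega(r_0^2)$ yet makes $\int^{\infty}\!\sqrt{r/\mathcal C(r/2)}\,dr=\infty$), whereas keeping the square root inside converts $\omega(r_0^2)$-growth into the integrable decay $o(r^{-3/2})$. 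The companion subtlety, that $\{p>1\}$ shrinks to a bounded-width annulus about $r_0=r$, likewise requires the sharpened Bessel bound rather than the crude one, and is what makes the $\mathcal G_2$ term integrable.
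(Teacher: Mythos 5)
Your proof is correct, and for the hardest case ($\mathcal{F}=\mathcal{A}$) it follows the same skeleton as the paper: split the integral according to $p\leq 1$ versus $p>1$, bound $|t\ln t|$ by a fractional power of $t$ and keep that power \emph{inside} the $dF_0$-integral, then combine the Gaussian factor of \eqref{eq:p(r|r0)-up1} with the tail estimate \eqref{eq:pdf-r-ub}; your $\tfrac{2}{e}\sqrt{t}$ bound is the paper's inequality \eqref{eq:xln(x)} with $\delta=\tfrac12$, and your Jensen remark is precisely why the paper bounds $\int p^{\delta}\,dF$ rather than $(\int p\,dF)^{\delta}$. You diverge in three places. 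First, on $\{p>1\}$ the paper bounds $-p\ln p\leq p\ln\bigl(k_u\,p_{R|R_0}\bigr)$ and then invokes the output-pdf bounds of Lemma~\ref{lem:outbds} to get an $F_0$-free, decaying majorant (see \eqref{ubconr} and \eqref{uppmin}); you instead use a uniform bound $p\leq p_{\max}$ together with a localization of $\{p_{R|R_0}>1/k_u\}$ to a fixed-width annulus $|r_0-r|<\Delta$ and then the tail bound. Both work; the paper's route is shorter because it recycles Lemma~\ref{lem:outbds}, while yours needs the sharpened estimate $I_0(x)\leq c\,e^{x}/\sqrt{1+x}$, which you justify correctly even though it is not among the bounds of Lemma~\ref{lem:bessel}. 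Second, in the peak-constrained cases the paper bounds $I^{+}$ through a comparison point $c$ (\eqref{eqbd1}--\eqref{eqbd2}) and the conditional-pdf \emph{lower} bound of Lemma~\ref{lembds}, whereas you reuse the same $\sqrt{p}$/Gaussian computation with $r_0\leq\rho$; your version avoids the paper's somewhat delicate monotonicity step \eqref{eqbd2} entirely. Third, for Part 1 the paper is a one-liner, and its stated bounds \eqref{eq:p(r|r0)-up1}--\eqref{eq:p(r|r0)-up2} grow linearly in $r$ on the diagonal $r_0=r$, so a refinement like your $1/\sqrt{1+x}$ factor (equivalently, the uniform boundedness of the Rician density) is genuinely needed for boundedness of $i$; your M-test argument for continuity is likewise more explicit than the paper's. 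One shared caveat: the terms $A/\mathcal{C}(r/2)$ and $A/\mathcal{C}(r-\Delta)$ blow up as $r\to 0^{+}$ because $\mathcal{C}(0)=0$, so cap the tail probability at $1$ (or use your uniform bound of Claim 1 on a bounded range of $r$) to make $d(r,\phi)$ finite and integrable near the origin --- the paper's \eqref{uppmin} and \eqref{finbd} need the same one-line patch.
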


\begin{proof}
The continuity of $i(r,\phi, r_0,\phi_0)$ follows from the definition of the conditional pdf \eqref{condpdf}. The boundedness is due to upper and lower bounds on $p(r,\phi|r_0,\phi_0)$ established in Lemma~\ref{lembds}. This completes Part 1) of the lemma. 

For Part 2), since $p\ln(p)<0$ when $p\in (0,1)$, we break down the integral into two parts:
\begin{equation*}
\left|\int i(r,\phi,r_0,\phi_0)  \,dF_0(r_0,\phi_0)\right| \leq  I^{+}(r,\phi) + I^{-}(r,\phi), 
\label{absineq}
\end{equation*} 
where
\begin{equation*}
 \begin{array}{ll}
        \displaystyle I^{+}(r,\phi) \defeq \int_{p(r,\phi|r_0,\phi_0) \leq 1} i(r,\phi,r_0,\phi_0) \,dF_0(r_0,\phi_0),\\
      \displaystyle   I^{-}(r,\phi) \defeq - \int_{p(r,\phi|r_0,\phi_0) > 1}i(r,\phi,r_0,\phi_0)  \,dF_0(r_0,\phi_0).    
         \end{array} 
\end{equation*}

We upper bound $I^{\pm} (r,\phi)$ for the peak amplitude and average cost constraints separately.
The upper bound on $I^{-} (r,\phi)$ is based on the upper bound on the conditional pdf in Lemma~\ref{lembds}-1. 
The upper bound on $I^{+} (r,\phi)$ is based on the lower bound on the conditional pdf in Lemma~\ref{lembds}-2.

\subsubsection*{Case 1) Peak amplitude constraint.}
Suppose that $F_0(r_0,\phi_0)$ is subjected to a peak amplitude constraint, \ie, $F_{R_0}(r_0)\in\mathcal{P}$. 
From the lower bound \eqref{condpdflbf} in Lemma~\ref{lembds}, there exists a finite $c > 0$ for which $\xi(c) < \frac{1}{2}$ such that 
\begin{IEEEeqnarray}{rCl}
\ln \frac{1}{p(r,\phi|c,\phi_0)} &\leq & 
\ln \frac{1}{\frac{1}{2 \pi} p_{R|R_0}(r|c)\left(1-\xi(c)\right)} 
\nn
\\
&\leq& \ln \frac{4 \pi}{p_{R|R_0}(r|c)}. 
\label{eqbd1}
\end{IEEEeqnarray}
On the other hand, choosing $c$ large enough and finite, we obtain
\begin{equation}
\ln \hspace{-0.03cm} \frac{1}{p(r,\phi|r_0,\phi_0)} \hspace{-0.05cm} \leq \hspace{-0.05cm}\ln \hspace{-0.03cm} \frac{1}{p(r,\phi|c,\phi_0)}, \label{eqbd2}
    \end{equation}
for $0 \leq r_0 \leq \rho$. This holds true since $p(r,\phi|c,\phi_0) \rightarrow 0$ as $c \rightarrow +\infty$. 

Inequalities~(\ref{eqbd1}) and~(\ref{eqbd2}) imply
\begin{align}
I^{+}(r,\phi) &\leq  \int_{0 < p(r,\phi|r_0,\phi_0) \leq 1} p(r,\phi|r_0,\phi_0)  \ln \frac{4 \pi}{p_{R|R_0}(r|c)}
\,dF_0(r_0,\phi_0) \nonumber\\
&\leq \left(\ln(4 \pi) -  \ln p_{R|R_0}(r|c)\right) p(r,\phi;F_0) \nonumber\\
&\leq \left(\ln(4 \pi) -  \ln \frac{2r}{\sigma^2 \mathcal{L}} +\frac{r^2 + c^2}{\sigma^2 \mathcal{L}}\right)
p(r,\phi;F_0) \label{lbconr},
\end{align}
where we used the bound $p_{R|R_0}(r|c) \geq \frac{2r}{\sigma^2 \mathcal{L}}e^{-\frac{r^2 + c^2}{\sigma^2 \mathcal{L}}}$.

Similarly, for $I^{-}(r,\phi)$ we have
 \begin{align}
 I^{-}(r,\phi) &\leq \int_{p(r,\phi|r_0,\phi_0) > 1} p(r,\phi|r_0,\phi_0)\ln k_{u} p_{R|R_0}(r|r_0) \,dF_0(r_0,\phi_0) \label{justubp}\\
 &\leq \left(\ln k_{u} + \ln \frac{2r}{\sigma^2\mathcal{L}}\right)p(r,\phi;F_0) \label{ubconr},
 \end{align}
 where we used Lemma~\ref{lembds}-1 in (\ref{justubp}) and (\ref{ubconr}) follows
 from \eqref{eq:p(r|r0)-up2}. 
 
Finally, adding \eqref{lbconr} and \eqref{ubconr}
 \begin{IEEEeqnarray}{rCl}
     I^{+}(r,\phi) + I^{-}(r,\phi) &\leq& \left(\ln(4 \pi) -  \ln \frac{2r}{\sigma^2 \mathcal{L}} +\frac{r^2 +
     c^2}{\sigma^2 \mathcal{L}}\right) p(r,\phi;F_0) + \left(\ln k_{u} + \ln
     \frac{2r}{\sigma^2\mathcal{L}}\right)p(r,\phi;F_0)\nn\\
     &\leq&  \left(\ln(4\pi k_u) +\frac{r^2 + c^2}{\sigma^2 \mathcal{L}} \right)\frac{2k_{u}r}{\sigma^2\mathcal{L}} e^{-\frac{r^2-2 r \rho}{\sigma^2 \mathcal{L}}}  \label{uniup}
     \\
     &\defeq& d(r,\phi),\nn
 \end{IEEEeqnarray}
  where we used Lemma~\ref{lem:outbds}-1 to obtain \eqref{uniup}. Clearly, $d(r,\phi)$ is 
 is an integrable function of $(r,\phi)$.

 \subsubsection*{Case 2) Average cost constraint}
 Suppose that $F_0(r_0,\phi_0)$ is subject to an average cost constraint, \ie, $F_{R_0}(r_0)\in\mathcal A$. 
 The inequality~(\ref{ubconr}) holds true and
 \begin{align}
 I^{-}(r,\phi) &\leq \left(\ln k_{u} + \ln \frac{2r}{\sigma^2\mathcal{L}}\right)p(r,\phi;F_0) \nn\\
 &\leq  \left(\ln k_{u} + \ln \frac{2r}{\sigma^2\mathcal{L}}\right) \frac{2k_{u}r}{\sigma^2\mathcal{L}} \left(e^{-\frac{r^2}{4 \sigma^2 \mathcal{L}}} + \frac{A}{\mathcal{C}\left(\frac{r}{2}\right)}\right), \label{uppmin}
 \end{align}
 where we used Lemma~\ref{lem:outbds}-2 in obtaining the last inequality. The right hand side of 
 \eqref{uppmin} is integrable because $\mathcal{C}(r)=\omega(r^2)$. 
 
 Finally, we obtain an upper bound on $I^{+}(r,\phi)$. We apply the inequality
 \begin{IEEEeqnarray}{rCl}
 |x\ln x| \leq \frac{1}{1-\delta}x^{\delta}, \quad 
 \label{eq:xln(x)}
 \end{IEEEeqnarray}
 which is valid for all $0 < x \leq 1$ and  $0 < \delta < 1$. We obtain 
\begin{align}
I^{+}(r,\phi) &=  - \int_{0 < p(r,\phi|r_0,\phi_0) \leq 1}  p(r,\phi|r_0,\phi_0)\ln p(r,\phi|r_0,\phi_0) \,dF(r_0,\phi_0) \nn\\
&\leq \frac{1}{1-\delta} \int_{0 < p(r,\phi|r_0,\phi_0) \leq 1} (p(r,\phi|r_0,\phi_0))^{\delta}\,dF(r_0,\phi_0) \nn\\
&\leq \frac{1}{1-\delta} k^{\delta}_u\int (p_{R|R_0}(r|r_0))^{\delta} \,dF(r_0) \label{justbd}\\
&\leq \frac{1}{1-\delta} k^{\delta}_u \left(\frac{2r}{\sigma^2 \mathcal{L}}\right)^{\delta}\int e^{-\delta\frac{(r-r_0)^2}{\sigma^2 \mathcal{L}}}\,dF(r_0)\label{bdI}\\
&\leq \frac{1}{1-\delta} \left(\frac{2k_ur}{\sigma^2 \mathcal{L}}\right)^{\delta}\left(e^{-\frac{\delta r^2}{4 \sigma^2 \mathcal{L}}} + \frac{A}{\mathcal{C}\left(\frac{r}{2}\right)}\right). \label{finbd}
\end{align}
The inequality \eqref{justbd} follows from the upper bound in Lemma~\ref{lembds}. Inequality~\eqref{bdI} follows from \eqref{eq:p(r|r0)-up1}.
The last inequality can be justified using the same set of inequalities that yielded inequality~(\ref{coscon}). Clearly, the right hand side in \eqref{finbd} is integrable because $\mathcal{C}(r)=\omega(r^2)$. 

The inequalities \eqref{uppmin} and \eqref{finbd} prove the desired inequality in the Part 2) of the lemma.

\end{proof}

\begin{lemma}[Continuity of the Conditional Entropy]
Let $F_{X}(r_0,\phi_0)$ be an input cdf such that $F_{R_0}(r_0) \in \mathcal{F}$. The conditional entropy $h(Y | X)$ in the channel \eqref{condpdf} is continuous function of $F_{X}$.
\label{thm:cont-cond-entr}
\end{lemma}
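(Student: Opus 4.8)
The plan is to express $h(Y\mid X)$ as a double integral of the conditional entropy density $i(r,\phi,r_0,\phi_0)$ and then to establish continuity in $F_X$ (with respect to weak convergence, equivalently the L\'evy--Prohorov metric used in the proof of Lemma~\ref{lemm:sup}) by two successive limit passages: one that uses weak convergence of the inputs to move the limit inside the inner integral, and one that uses the uniform integrable domination of Lemma~\ref{lemm:entropy-density}-2 to move the limit inside the outer integral.

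Concretely, for any input cdf $F_0=F_X$ with $F_{R_0}\in\mathcal F$ I would first write
\[
h(Y\mid X)=\int\Bigl(\int i(r,\phi,r_0,\phi_0)\,dr\,d\phi\Bigr)\,dF_0(r_0,\phi_0),
\]
and swap the two integrations. The Fubini step is legitimate because the nonnegative quantities $I^{+}(r,\phi)$ and $I^{-}(r,\phi)$ from the proof of Lemma~\ref{lemm:entropy-density}-2 add up to exactly $\int |i(r,\phi,r_0,\phi_0)|\,dF_0(r_0,\phi_0)$, and this sum is bounded by the integrable, $F_0$-independent function $d(r,\phi)$; hence $\iint|i|\,dF_0\,dr\,d\phi<\infty$ and
\[
h(Y\mid X)=\int\Bigl(\int i(r,\phi,r_0,\phi_0)\,dF_0(r_0,\phi_0)\Bigr)\,dr\,d\phi .
\]
A byproduct is $|h(Y\mid X)|\le\int d(r,\phi)\,dr\,d\phi<\infty$, so the conditional entropy is finite.

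Next, take a sequence of admissible inputs $F_0^{(n)}\to F_0$ weakly (the inputs are probability measures on $\Reals_0^+\times\mathbb T$, with $\mathbb T=[0,2\pi)$ and endpoints identified, since $p(r,\phi\mid r_0,\phi_0)$ in \eqref{condpdf} is $2\pi$-periodic in $\phi_0$). For each fixed $(r,\phi)$ the map $(r_0,\phi_0)\mapsto i(r,\phi,r_0,\phi_0)$ is bounded and continuous by Lemma~\ref{lemm:entropy-density}-1, so by the defining property of weak convergence
\[
\int i(r,\phi,r_0,\phi_0)\,dF_0^{(n)}(r_0,\phi_0)\;\longrightarrow\;\int i(r,\phi,r_0,\phi_0)\,dF_0(r_0,\phi_0)
\]
pointwise in $(r,\phi)$. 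Since every inner integral on the left is dominated in absolute value by the single integrable function $d(r,\phi)$ --- uniformly in $n$, because all $F^{(n)}_{R_0}\in\mathcal F$ --- the dominated convergence theorem applied to the outer integral gives $h(Y\mid X;F_0^{(n)})\to h(Y\mid X;F_0)$, which is the asserted continuity.

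The genuinely delicate point is not the limit passages themselves but the domination supplied by Lemma~\ref{lemm:entropy-density}: one must control the contribution $I^{+}(r,\phi)$ of the region $\{p\le 1\}$ via the lower bound on the conditional pdf (Lemma~\ref{lembds}-2), and the contribution $I^{-}(r,\phi)$ of $\{p>1\}$ via the upper bound (Lemma~\ref{lembds}-1), and then show both are integrable over $(r,\phi)$ uniformly over $F_0\in\mathcal F$ --- this is where \eqref{uniup} (peak constraint) and \eqref{uppmin}, \eqref{finbd} (average cost, finite because $\mathcal C(r)=\omega(r^2)$) are used, the joint constraint being handled as a sub-case of the peak one. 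Once that uniform integrable bound and the joint boundedness/continuity of $i$ are in place, the same argument covers all three input-constraint classes simultaneously.
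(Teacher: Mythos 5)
Your proposal is correct and follows essentially the same route as the paper: write $h(Y\mid X)$ as an iterated integral of the entropy density, swap the order via Fubini using the $F_0$-independent integrable bound $d(r,\phi)$ from Lemma~\ref{lemm:entropy-density}-2 (whose proof indeed bounds $I^{+}+I^{-}=\int|i|\,dF_0$), pass the limit inside the inner integral by weak convergence together with the boundedness and continuity of $i$ from Lemma~\ref{lemm:entropy-density}-1, and then apply dominated convergence to the outer integral. The only cosmetic difference is that you make the topology on the input space (periodicity in $\phi_0$) explicit, which the paper leaves implicit.
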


\begin{proof}

Let $\{X_m = (R_m, \Phi_m)\}_{m \geq 1}$ be a sequence of input variables whose cdfs are denoted respectively by $\{F_m(r_0,\phi_0)\}_{m \geq 1}$ and such that $F_{R_m}(r_0) \in \mathcal{F}$. Suppose that $\{F_m(r_0,\phi_0)\}_{m \geq 1}$ converges weakly to $F_{X}(r_0,\phi_0)\defeq F_0(r_0,\phi_0)$. We have
\begin{align}
h(Y | X) &= \int h(Y|r_0,\phi_0) \,dF_{0}(r_0,\phi_0) \nonumber\\
&= - \int \int_{r,\phi} p(r,\phi|r_0,\phi_0)\ln p(r,\phi|r_0,\phi_0) \,dr \,d\phi\,dF_{0}(r_0,\phi_0) \nonumber\\
&= \int_{r,\phi}  \int i(r,\phi,r_0,\phi_0) \,dF_{0}(r_0,\phi_0)\,dr \,d\phi \label{interap1}\\
&= \int_{r,\phi} \lim_{m \rightarrow +\infty} \int i(r,\phi,r_0,\phi_0) \,dF_{m}(r_0,\phi_0)\,dr \,d\phi \label{wcvapp}\\
&= \lim_{m \rightarrow +\infty} \int_{r,\phi} \int i(r,\phi,r_0,\phi_0)  \,dF_{m}(r_0,\phi_0)\,dr \,d\phi \label{interap2}\\
&= \lim_{m \rightarrow +\infty} \int  \int_{r,\phi} i(r,\phi,r_0,\phi_0)  \,dr \,d\phi \,dF_{m}(r_0,\phi_0)\label{interap3}\\
&=  \lim_{m \rightarrow +\infty} h(Y | X_m). \nonumber 
\end{align}
The order of the integrals in~(\ref{interap1}) and~(\ref{interap3}) can be exchanged by applying the Fubini's theorem and using Lemma~\ref{lemm:entropy-density}-2. 
Equation~\eqref{wcvapp} is due to the weak convergence and to the fact that $i(r,\phi,r_0,\phi_0)$ is continuous and bounded using Lemma~\ref{lemm:entropy-density}-1.
The order of the limit and the integral in \eqref{wcvapp}--\eqref{interap2} can be exchanged  
by applying the dominated convergence theorem  and using Lemma~\ref{lemm:entropy-density}-2.

\end{proof}

\begin{lemma}[Continuity of the Output Entropy]
The output entropy $h(Y)$ in the channel \eqref{condpdf} is continuous function of the input cdf $F_{X}(r_0,\phi_0)$, where $F_{R_0}(r_0) \in \mathcal{F}$. 
\label{thm:cont-out-entr}
\end{lemma}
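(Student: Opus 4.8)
The plan is to repeat the argument used for the conditional entropy in Lemma~\ref{thm:cont-cond-entr}, now working with the \emph{output entropy density}
\[
j(r,\phi;F_0)\defeq\begin{cases}-\,p(r,\phi;F_0)\ln p(r,\phi;F_0), & \text{if } p(r,\phi;F_0)>0,\\[2pt] 0, & \text{if } p(r,\phi;F_0)=0,\end{cases}
\]
where $p(r,\phi;F_0)=\int p(r,\phi|r_0,\phi_0)\,dF_0(r_0,\phi_0)$, so that $h(Y)=\int_{r,\phi}j(r,\phi;F_0)\,dr\,d\phi$. Given input cdfs $\{F_m\}_{m\geq1}$ with $F_{R_m}\in\mathcal{F}$ converging weakly to $F_0$, the goal $h(Y_m)\to h(Y)$ would follow from the dominated convergence theorem once we establish (i) $j(r,\phi;F_m)\to j(r,\phi;F_0)$ pointwise in $(r,\phi)$ and (ii) an $m$-uniform integrable dominating function, i.e.\ $|j(r,\phi;F_m)|\leq\bar d(r,\phi)$ for all $m$, with $\bar d$ independent of $m$ and $\int_{r,\phi}\bar d(r,\phi)\,dr\,d\phi<\infty$.

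For (i), fix $(r,\phi)$; by Lemma~\ref{lembds} and \eqref{eq:p(r|r0)-up2} the map $(r_0,\phi_0)\mapsto p(r,\phi|r_0,\phi_0)$ is bounded, and it is continuous by the series \eqref{condpdf}, so weak convergence of $F_m$ gives $p(r,\phi;F_m)\to p(r,\phi;F_0)$, and since $t\mapsto -t\ln t$ is continuous on $[0,\infty)$ with value $0$ at $t=0$, (i) follows. For (ii) I would split $|j(r,\phi;F_m)|$ into the contributions of the sets $\{p(r,\phi;F_m)>1\}$ and $\{0<p(r,\phi;F_m)\leq1\}$ — as is needed since $-t\ln t$ changes sign at $t=1$ — exactly as the quantities $I^{-}$ and $I^{+}$ are handled in the proof of Lemma~\ref{lemm:entropy-density}, but with the conditional pdf replaced by the output pdf and the conditional-pdf bounds of Lemma~\ref{lembds} replaced by the output-pdf bounds of Lemma~\ref{lem:outbds}. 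On $\{p>1\}$ one uses the uniform upper bound $p(r,\phi;F_m)\leq k_u\,p(r;F_{R_m})$ together with the case-specific bounds of Lemma~\ref{lem:outbds}-1,2; these bounds vanish as $r\to+\infty$ and are $O(r)$ as $r\to0$, so this set lies in a bounded range of $r$ on which $p\ln p$ is bounded, hence integrable. On $\{0<p\leq1\}$ one applies $-t\ln t\leq\frac{1}{1-\delta}\,t^{\delta}$ for $t\in(0,1]$ and $\delta\in(0,1)$, then $p(r,\phi;F_m)\leq k_u\,p(r;F_{R_m})$ together with the tail-splitting used to prove Lemma~\ref{lem:outbds}-2, choosing $\delta$ sufficiently close to $1$; the hypothesis $\mathcal{C}(r_0)=\omega(r_0^2)$ (respectively $\mathcal{C}(r_0)=\omega(\ln r_0)$ together with the peak bound of Lemma~\ref{lem:outbds}-1 in the joint-constraint case) then makes the resulting bound integrable. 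Since the dominating function is independent of $m$, the dominated convergence theorem yields $h(Y_m)=\int_{r,\phi}j(r,\phi;F_m)\,dr\,d\phi\to\int_{r,\phi}j(r,\phi;F_0)\,dr\,d\phi=h(Y)$.

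The main obstacle is step (ii) on the set where $p(r,\phi;F_m)$ is small: there $-p\ln p>0$ and the logarithmic factor grows, so one must balance it against the decay of the $m$-uniform upper bound on $p$. Under the peak amplitude constraint (and hence in the joint case, where the support is confined to $[0,\rho]$) this is immediate from Lemma~\ref{lem:outbds}-1; under the pure average-cost constraint it is precisely where the growth condition on $\mathcal{C}$ and the freedom in the choice of $\delta$ are used.
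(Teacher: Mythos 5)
Your proposal has the same skeleton as the paper's proof---pointwise convergence of the output entropy density under weak convergence (via continuity and boundedness of the conditional pdf) plus an $m$-uniform integrable dominating function, then dominated convergence---and your handling of the pointwise step, of the set $\{p>1\}$, and of the peak and joint constraints is sound. Where you diverge is the small-$p$ region under the pure average-cost constraint: the paper dominates $p(r,\phi;F_0)\bigl|\ln p(r,\phi;F_0)\bigr|$ by invoking the \emph{lower} bound on the output pdf (Lemma~\ref{lem:outbds}-3) to control the logarithm, whereas you discard any lower bound and transplant the $-t\ln t\leq \frac{1}{1-\delta}t^{\delta}$ device from Lemma~\ref{lemm:entropy-density}.

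That transplant is where your argument has a genuine gap. In Lemma~\ref{lemm:entropy-density} the exponent $\delta$ acts on the conditional pdf \emph{inside} the integral against $dF$, so the Markov tail bound enters to the first power and the tail of the dominant is of order $r^{\delta}A/\mathcal{C}(r/2)=o\bigl(r^{\delta-2}\bigr)$, integrable for every $\delta<1$. In your setting the exponent sits \emph{outside} the mixture $p(r,\phi;F_m)=\int p(r,\phi|r_0,\phi_0)\,dF_m$, so after the tail splitting you can only get $\bigl(e^{-r^2/(4\sigma^2\mathcal{L})}+A/\mathcal{C}(r/2)\bigr)^{\delta}\leq e^{-\delta r^2/(4\sigma^2\mathcal{L})}+\bigl(A/\mathcal{C}(r/2)\bigr)^{\delta}$, and the tail of your dominant is of order $r^{\delta}\bigl(A/\mathcal{C}(r/2)\bigr)^{\delta}$. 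The hypothesis $\mathcal{C}(r_0)=\omega(r_0^2)$ does not make this integrable for any $\delta<1$: take $\mathcal{C}(r_0)=r_0^2\ln(e+r_0^2)$, which satisfies C1--C3; then the tail term behaves like $r^{-\delta}(\ln r)^{-\delta}$, whose integral diverges, and ``choosing $\delta$ sufficiently close to $1$'' cannot help since closing the bound this way would require $\delta>1$. Your chain only works when $\mathcal{C}$ grows polynomially faster than quadratic (roughly $\mathcal{C}(r_0)\gtrsim r_0^{2/\delta}$ for some admissible $\delta<1$), which is strictly stronger than C3. To close the $\mathcal{A}$ case you need an ingredient your write-up never uses, e.g.\ the lower bound of Lemma~\ref{lem:outbds}-3 to bound $|\ln p|$ (the paper's route), or the monotonicity of $-t\ln t$ on $(0,1/e]$, which lets you bound $-p\ln p$ by the value of $-u\ln u$ at the Lemma~\ref{lem:outbds}-2 envelope $u$ and keeps the tail factor $A/\mathcal{C}(r/2)$ to the first power (at the price of a logarithmic factor); either way the average-cost case requires more care than the statement ``$\delta$ close to $1$ plus $\mathcal{C}=\omega(r_0^2)$'' suggests.
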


\begin{proof}

Let $F_{X}(r_0,\phi_0) \defeq F_0(r_0,\phi_0)$ be such that $F_{R_0}(r_0) \in \mathcal{F}$ and let $p(r,\phi; F_0)=\int p(r,\phi|r_0, \phi_0)dF_0(r_0,\phi_0)$ be the corresponding output pdf in the channel \eqref{condpdf}. 
From \eqref{condpdf}, $p(r,\phi|r_0, \phi_0)$ is a continuous bounded function of $(r_0,\phi_0)$. Thus, 
from the definition of the weak convergence, $p(r,\phi; F_0)$ is continuous in $F_0$. Hence  $p(r,\phi; F_0)\bigl|\ln(p(r,\phi; F_0)\bigr|$ is continuous in $F_0$.
Second, from Lemma~\ref{lem:outbds}-3, 
$p(r,\phi; F_0)\bigl|\ln(p(r,\phi; F_0)\bigr|$
is dominated by an integrable function of $(r,\phi)$, that is independent of $F_0$. 
Hence $h(Y)$ is continuous in output pdf.
Combining these two results proves the lemma.

\end{proof}

\section{Analyticity of the KKT Conditions}
\label{app:analyticity}

Consider $\text{LHS}_{\rho}(r_0)$ and $\text{LHS}_{A}(r_0)$ in the KKT conditions defined in \eqref{eq:KKT1} and \eqref{eq:KKT} respectively. 

\begin{lemma}[Analytic Extensions of $\text{LHS}_{\rho}(r_0)$ and $\text{LHS}_{A}(r_0)$]
\label{lem:analyticity}
There exists a $\delta > 0$ and a non-empty open connected region $\mathcal O _{\delta}$ in the complex plane containing the non-negative real line 
$\Reals^{+}_0 = \left\{z \in \Reals: z \geq 0\right\}$ such that $\text{LHS}_{\rho}(r_0)$ and $\text{LHS}_{A}(r_0)$ can be analytically extended from $r_0\in \Reals^{+}_0$ to $z \in \mathcal O _\delta$.
\end{lemma}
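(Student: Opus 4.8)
The plan is to split $\text{LHS}_{\rho}(r_0)$ and $\text{LHS}_{A}(r_0)$ into their constituent pieces and analytically continue each piece in $r_0$ to a common open connected region $\mathcal O_\delta$ containing $\Reals^+_0$, with $\delta>0$ taken small; $\mathcal O_\delta$ will be a thin horizontal strip around the non-negative real axis (possibly pinched as $\Re(r_0)\to+\infty$). The additive constant $C-\ln(2\pi)$ is entire, and the cost term $\nu(\mathcal C(r_0)-A)$ in $\text{LHS}_{A}$ continues to such a strip directly by condition~C2. The work is in the two integral terms
\[
g(r_0)\defeq\int_0^{+\infty}p(r|r_0)\ln p(r;F_0^*)\,dr,\qquad
e(r_0)\defeq\frac{1}{2\pi}\int_0^{2\pi}h(R,\Phi|r_0,\phi_0)\,d\phi_0 .
\]
For each, the recipe is identical: (i) for fixed output coordinates, continue the integrand in $r_0$ term by term, using that $e^{-(\cdot)^2/\sigma^2\mathcal L}$ and each $I_m(\cdot)$ are entire (Lemma~\ref{lem:bessel}-6); (ii) bound the continued integrand, uniformly over compact subsets of $\mathcal O_\delta$, by a fixed integrable function of the output coordinates; (iii) apply Morera's theorem, pushing the triangle contour integral inside the output integral by Fubini (legitimate by (ii)) and annihilating it fiberwise by Cauchy's theorem.

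For $g$ this is straightforward: $\ln p(r;F_0^*)$ is independent of $r_0$, and $r_0\mapsto p(r|r_0)$ in \eqref{condR} is entire. From $|I_0(w)|\le I_0(\Re w)\le e^{|\Re w|}$ (Lemma~\ref{lem:bessel}-2) one obtains, for $z$ in any compact of the strip, a Gaussian-type bound on $|p(r|z)|$ that decays in $r$; together with the two-sided bounds on $p(r;F_0^*)$ in Lemma~\ref{lem:outbds} (which force $|\ln p(r;F_0^*)|=O(r^2)$), the integrand is dominated uniformly on compacts by an integrable function of $r$, so step~(iii) makes $g$ analytic on the strip.

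The term $e$ is the crux. Since the conditional density \eqref{condpdf} depends on $\phi,\phi_0$ only through $\phi-\phi_0$, $h(R,\Phi|r_0,\phi_0)$ does not depend on $\phi_0$ and $e(r_0)=-\int_0^{+\infty}\!\int_0^{2\pi}p(r,\phi|r_0,0)\ln p(r,\phi|r_0,0)\,d\phi\,dr$. Writing $\Re(w)=\frac{1}{2}(w+\bar w)$, each summand of \eqref{condpdf} continues to a sum of two entire functions of $r_0$---$C_m(r,z)e^{jm(\phi-\gamma z^2\mathcal L)}$ and the same expression with the constants $a_m,b_m$ replaced by their complex conjugates---and the resulting series converges uniformly on compacts of the strip by the estimates \eqref{eq:cm-ub-1}--\eqref{eq:cm-ub-2} (with $r_0^2$ read as $\Re(z^2)$) together with $\sum_m\beta_m/\sinh\beta_m<\infty$; call this continuation $\hat p(r,\phi|z)$. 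The difficulty is the logarithm, since $\hat p\ln\hat p$ is analytic only where $\hat p\ne0$. Shrinking $\delta$ keeps the strict margin $\Re(a_1)-1/\sigma^2\mathcal L>0$ of \eqref{lwbda} valid on $\mathcal O_\delta$, so $\hat p(r,\phi|z)=\frac{1}{2\pi}p_{R|R_0}(r|z)\bigl(1+\varepsilon(r,z)\bigr)$ with $|\varepsilon(r,z)|$ bounded by the complex analogue of $\xi$ in Lemma~\ref{lembds}, which is $<1$ once $r$ or $\Re(z)$ is large; on the remaining bounded part one uses positivity of the real conditional density (for instance $\hat p(r,\phi|0)=\frac{1}{2\pi}p_{R|R_0}(r|0)>0$ for $r>0$) and continuity. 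Hence for fixed $r>0$ the zeros of $\hat p(r,\phi|\cdot)$ in $\mathcal O_\delta$ are isolated, and either absent on a suitably chosen $\mathcal O_\delta$ or confined to a null set of $(r,\phi)$, which still suffices for the Fubini--Morera step. One then dominates $|\hat p\ln\hat p|$ on compacts of $\mathcal O_\delta$ by a fixed integrable function of $(r,\phi)$---the complex version of Lemma~\ref{lemm:entropy-density}, using $|\hat p|\le k_u|p_{R|R_0}(r|z)|$ and the elementary bounds $|w\ln w|\le\frac{1}{1-\delta'}|w|^{\delta'}$ for $|w|\le1$ and $|w\ln w|\le|w|\bigl(|\ln|w||+\pi\bigr)$ for $|w|>1$---and step~(iii) gives analyticity of $e$ on $\mathcal O_\delta$.

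The main obstacle is precisely the term $e$: every other piece is a routine adaptation of the estimates in Appendix~\ref{app:cont}, but $\ln p(r,\phi|r_0,\phi_0)$ has no closed form here (the channel is not conditionally Gaussian), so one must (a) control the zero set of the continued density $\hat p$---this is what forces $\delta$ small and relies essentially on the strictly positive decay margin $\Re(a_1)-1/\sigma^2\mathcal L>0$ that makes the correction series in \eqref{condpdf} subdominant for large output amplitude or large $\Re(r_0)$---and (b) produce a single integrable majorant for $|\hat p\ln\hat p|$ that is uniform over compacts of $\mathcal O_\delta$, i.e.\ re-prove the bounds of Appendix~\ref{app:cont} off the real axis.
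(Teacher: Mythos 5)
Your decomposition and machinery are exactly those of the paper's Appendix~\ref{app:analyticity}: the cost term is handled by C2, the conditional pdf is extended via the two-sided series with conjugated coefficients ($a_{-m}=a_m^*$, $b_{-m}=b_m^*$), uniform convergence on compacts makes the extension entire, and both integral terms are treated by an integrable majorant plus Fubini and Morera. All of that is sound and matches the paper. The problem is the point you yourself flag as the crux: non-vanishing of the continued density so that the principal logarithm is analytic on a neighborhood of $\Reals^+_0$. Your argument covers it only where $r$ or $\Re(z)$ is large (via the margin $\Re(a_1)-1/\sigma^2\mathcal{L}>0$ of \eqref{lwbda}), and on the remaining bounded set you invoke ``positivity of the real conditional density and continuity'', supported only by the trivial case $r_0=0$ where all $C_m$ vanish. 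But strict positivity of $p(r,\phi|r_0,\phi_0)$ for every $r>0$, $r_0\geq 0$ and every phase is \emph{not} evident from \eqref{condpdf}: the terms $\Re\bigl(C_m e^{jm(\cdot)}\bigr)$ are oscillatory in $\phi$ and could a priori cancel the isotropic term $\frac{1}{2\pi}p_{R|R_0}(r|r_0)$ at moderate $(r,r_0)$, and nonnegativity of a pdf does not preclude zeros. This positivity is precisely what the paper proves by a separate, genuinely probabilistic argument: the Karhunen--Lo\'eve expansion of the nonlinear phase noise in \cite[Sec.~V]{yousefi2011opc}, writing $\Phi$ as $\Phi_0$ plus a weighted sum of independent non-central chi-squared variables and showing the (wrapped) conditional phase density is bounded away from zero even after conditioning on $R$. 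Without that ingredient your choice of $\mathcal{O}_\delta$ on which $\Re(\hat p)>0$ is not justified.

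Your fallback --- that zeros of $z\mapsto\hat p(r,\phi|z)$ might be ``confined to a null set of $(r,\phi)$, which still suffices for the Fubini--Morera step'' --- is asserted, not argued: nothing in the proposal shows the projection of the zero set is Lebesgue-null, and even granting it you would still have to re-examine the continuity step for $u(z)$ (the principal branch makes $\hat p\ln\hat p$ discontinuous where $\hat p$ crosses $(-\infty,0)$, and isolated zeros inside a triangle destroy the fiberwise Cauchy annihilation for exactly those $(r,\phi)$). So the gap is concrete: supply the strict positivity of the conditional density on the bounded region (as the paper does via the chi-squared/wrapped-distribution argument, or by some other means), or a genuine proof of the null-set claim; everything else in your outline is a faithful reconstruction of the paper's proof.
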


\begin{proof}

We break down the proof into five steps.

\subsubsection{Analyticity of $p(r,\phi|z,\phi_0)$}
\label{subana1}

Let 
\begin{equation}
s(z;r,\phi,\phi_0) =  \frac{1}{2 \pi}\left(p_{R|R_0}(r|z) \, + \sum^{+\infty}_{m = -\infty,  m \neq 0} C_{m}(r,z)e^{jm(\phi-\phi_0-\gamma z^2\mathcal{L})}\right),
\label{eq:sz}
\end{equation} 
be an extension of $p(r,\phi|r_0,\phi_0)$ defined by~(\ref{condpdf}) from $r_0\in \Reals_{0}^{+}$ 
to the complex plane $r_0\in\Complex$. We note that after straightforward manipulations of equations~(\ref{coeffC}),~(\ref{coeffa}), and~(\ref{coeffb}), it can be shown that $a_{-m} =  a^*_m$, $b_{-m} = b^*_m$ and $C_{-m}(r,r_0) = {C}_m^*(r,r_0)$, $m \in \Naturals$. We prove that $s(z;r,\phi,\phi_0)$ is an entire
function.

From Lemma~\ref{lem:bessel}-6, $I_m(z)$ is an entire function of $z$ for $m \in \integers$. Thus, $p_{R|R_0}(r|z)$ and each term in the sum in 
\eqref{eq:sz} are entire functions. Below, we show that $|C_m(r,z)e^{jm(\phi-\phi_0-\gamma z^2\mathcal{L})}|$ can be upper bounded in $\Complex$ for large values of $m$
by an absolutely summable sequence of $m$ that is independent of $z$. Thus the series \eqref{eq:sz} 
converges uniformly over $z\in\mathcal \Complex$. The sum of a uniformly convergent sequence of analytic functions is analytic.
Therefore, $s(z;r,\phi,\phi_0)$ is an analytic function on $\Complex$.

We have
\begin{IEEEeqnarray}{rCl}
 \left|C_{m}(r,z)e^{jm(\phi-\phi_0-\gamma z^2\mathcal{L})}\right| & =&  \left|r b_me^{-a_m(r^2+z^2)}I_{m}\left(2b_mzr\right) e^{-jm\gamma z^2\mathcal{L}}\right| 
 \nn\\
 &=& r|b_{|m|}|\left|I_{m}\left(2b_mzr\right)\right| e^{-\Re(a_m)r^2}e^{-\Re(a_m)\Re(z^2)} e^{(\Im(a_m)+m \gamma \mathcal{L}) \Im(z^2)}
 \label{besprop1},
\end{IEEEeqnarray}
where we used the fact that $|b_m| = |b_{|m|}|$ since $b_{-m} = b^*_m$. Recall from the proof of Lemma~\ref{lembds} that $\Re(a_m) \defeq \frac{1}{\sigma^2 \mathcal{L}} t(\beta_m)$ where $t(x)$ is given by equation~(\ref{teq}) and $\beta_{m}\defeq \sqrt{\frac{m \gamma}{2}} \sigma \mathcal{L}$, $m > 0$. Also, the imaginary part of $a_m$ in \eqref{coeffa} is $\Im(a_m) \defeq\frac{1}{\sigma^2\mathcal{L}} \tau(\beta_m) $, 
where
\begin{eqnarray}
\tau(x) = \frac{x \left(\sinh(2x) - \sin(2x)\right)} {2\left(\sinh^2(x) + \sin^2(x)\right)}\nn.
\end{eqnarray}
Notice that $t(x) \equiv x$ and $\tau(x) \equiv x$ as $x \rightarrow +\infty$. Let $\epsilon > 0$, then there exists an $M>0$ such that whenever $|m| > M$, we have
\begin{align}
0<(1-\epsilon) \frac{\beta_{|m|}}{\sigma^2 \mathcal{L}} &< \Re(a_{m}) < (1+\epsilon) \frac{\beta_{|m|}}{\sigma^2 \mathcal{L}}\label{bdram}\\
0<(1-\epsilon) \frac{\beta_{|m|}}{\sigma^2 \mathcal{L}} &< \left|\Im(a_{m})\right| < (1+\epsilon) \frac{\beta_{|m|}}{\sigma^2 \mathcal{L}},\label{bdiam}
\end{align}

where we used the fact that $a_{-m} =  a^*_m$. Now, let $\mathcal B(z_0, \zeta)=\left\{z: \left|z-z_0\right| < \zeta\right\}$ be a neighborhood of $z_0 \in \Complex$. 
If $z\in \mathcal B(z_0,\zeta)$, then $|z^2| < (|z_0| + \zeta)^2$ and 
\begin{align}
-\Re(a_{m})\Re(z^2) &< (1+\epsilon) \frac{\beta_{|m|}}{\sigma^2 \mathcal{L}} (|z_0| + \zeta)^2 \label{bd1}\\
\Im(a_{m}) \Im(z^2) &< (1+\epsilon) \frac{\beta_{|m|}}{\sigma^2 \mathcal{L}} (|z_0| + \zeta)^2 \label{bd2}\\
-\Re(a_{m}) r^2 &< - \frac{r^2}{\sigma^2 \mathcal{L}},\label{bd3}
\end{align}
for $|m| > M$, where we used inequalities~(\ref{bdram}), ~(\ref{bdiam}), and~(\ref{lwbda}).
Also, for $|m|$ large enough, we have $|b_{|m|}| < 1$ as $|m| \rightarrow +\infty$ as inferred by equation~(\ref{unibdb1}) and
\begin{equation}
\left|I_{m}\left(2b_mzr\right)\right| = \left|I_{|m|}\left(2b_mzr\right)\right| \leq (1+\epsilon) \frac{(2|b_m| |z| r)^{|m|}}{|m|!}\leq (1+\epsilon) \frac{(2(|z_0| + \zeta) r)^{|m|}}{|m|!}, \label{bd4}
\end{equation}
where we used Lemma~\ref{lem:bessel}-3. Using the bounds in~(\ref{bd1}),~(\ref{bd2}),~(\ref{bd3}), and~(\ref{bd4}), we obtain from~\eqref{besprop1} for $|m| > M$
\begin{IEEEeqnarray}{rCl}
 \left|C_{m}(r,z)e^{jm(\phi-\phi_0-\gamma z^2\mathcal{L})}\right| &\leq&(1+\epsilon) \,r e^{-\frac{r^2}{\sigma^2 \mathcal{L}}} \frac{(2 (|z_0| + \zeta))^{|m|} r^{|m|} }{|m|!} e^{2(1+\epsilon)\frac{\beta_{|m|} (|z_0| + \zeta)^2}{\sigma^2 \mathcal{L}}} e^{|m|\gamma \mathcal{L}(|z_0| + \zeta)^2 }\nn\\
 &\leq&(1+\epsilon) \,r e^{-\frac{r^2}{\sigma^2 \mathcal{L}}} \frac{\left(2 \, r (|z_0| + \zeta) e^{\sqrt{2 \gamma}(1+\epsilon)\frac{  (|z_0| + \zeta)^2}{\sigma}} e^{\gamma \mathcal{L}(|z_0| + \zeta)^2}\right)^{|m|}}{|m|!},
\label{coeffCbd2}
\end{IEEEeqnarray}
where we replaced $\beta_{|m|}$ by its expression and we used the fact that $\sqrt{|m|} < |m|$ for large values of $|m|$. It can be seen that the z-independent upper bound in \eqref{coeffCbd2} is summable over $m$. Thus the infinite sum in \eqref{eq:sz} is uniformly convergent for all $z\in\Complex$ and  $s(z;r,\phi,\phi_0)$ is an entire function.



\subsubsection{Analyticity of $s(z;r,\phi,\phi_0)\ln \left(s(z;r,\phi,\phi_0)\right)$} 
We show that there exists an open covering $\mathcal{O}$ of $\Reals^+_0$ such that the function $s(z;r,\phi,\phi_0)\ln \left(s(z;r,\phi,\phi_0)\right)$ is analytic in $z \in \mathcal{O}$. 
We proved in the previous part that $s(z;r,\phi,\phi_0)$ is analytic in $z \in \Complex$. As for the analyticity of $s(z;r,\phi,\phi_0) \ln \left(s(z;r,\phi,\phi_0)\right)$:
\begin{itemize}
[leftmargin=*, label={--}]

\item We have
\begin{IEEEeqnarray}{rCl}
  p(r,\phi | r_0, \phi_0)&=&     p(r | r_0, \phi_0) p(\phi | r, r_0, \phi_0)
  \nn\\
&=&   p(r | r_0) p(\phi | r, r_0, \phi_0),
\label{eq:condpdf-bayes}
\end{IEEEeqnarray}
where \eqref{eq:condpdf-bayes} follows because $p(r|r_0,\phi_0)=p(r|r_0)$ from \eqref{eq:zerodiso}. 
From \eqref{condpdf}, $p(r|r_0)>0$ for $r \in (0, \infty)$ and $r_0 \geq 0$. For the phase conditional pdf in \eqref{eq:condpdf-bayes}, we apply 
the Karhunen-Lo\'eve expansion in \cite[Sec. V]{yousefi2011opc}, to write the phase $\Phi$ as $\Phi_0$ plus a weighted 
sum $S$ of $M\to\infty$ independent 
non-central chi-squared random variables with two degrees-of-freedom and a centrality parameter depending on $r_0$. 
The PDF $p_S(s;M)$ is a generalized chi-squared random variable. It can be verified that 
$p_S(s;M)>c$ for $s\in (0,\infty)$, where $c$ is independent of $M$, so that $p_S(s;\infty)>0$.
The $p(\phi |r_0, \phi_0)$ is the wrapped distribution of $p_S(s;\infty)$, thus $p(\phi|\phi_0,r_0)>0$, $r_0\in[0,\infty)$, $\forall\phi_0 \in [0, 2\pi)$. 
Finally, using \cite[Eq. 7]{yousefi2016cap}, fixing $R$ fixes 
only one term in $S$, leaving $M-1$ independent terms. Thus $p(\phi|\phi_0,r_0, r)>0$. Summarizing, if $r\in(0,\infty)$, then
\begin{IEEEeqnarray*}{rCl}
  p(r,\phi | r_0, \phi_0)>0, \quad \forall r_0 \in[0,\infty),\quad \forall\phi_0.
\end{IEEEeqnarray*}

\item For any $z_l \in \Reals^+_0$, there exits an open ball $\mathcal{B}_l(z_l,\zeta_{l})$ such
that $\Re\left(s(z;r,\phi,\phi_0)\right) > 0$, when $z \in \mathcal{B}_l$. This is true because 
$s(r_0;r,\phi,\phi_0) = p(r,\phi|r_0,\phi_0)$ is positive in $0 \leq r_0 < \infty$ (up to a set of Lebesgue measure zero) and continuous on $\Complex$. 

\item Therefore, there exists a sequence $\{z_i\}_{i \geq 1}$ of real non-negative numbers and an open covering  $\left\{\mathcal{O} = \cup_{i \geq 1}\mathcal{B}_i\right\} \subset \Complex$ of the non-negative real line such that $\Re\left(s(z;r,\phi,\phi_0)\right) > 0$.

\item Taking the principal branch of the logarithm, $\log(z)$ is analytic in $\Complex\backslash\Reals_0^+$. Since $\Re\left(s(z;r,\phi,\phi_0)\right) > 0$, we obtain that 
$s(z;r,\phi,\phi_0)\ln \left(s(z;r,\phi,\phi_0)\right)$ is analytic in $z \in \mathcal{O}$.
\end{itemize}

It follows that $s(z;r,\phi,\phi_0)\ln \left(s(z;r,\phi,\phi_0)\right)$ is analytic in $z \in \mathcal{O}$. 

\subsubsection{Analyticity of $\int^{2 \pi}_{0} h(R,\Phi|z,\phi_0)\,d\phi_0$} 
\label{subana2}
We now prove that 
\begin{IEEEeqnarray*}{rCl}
 u(z) &=& \int^{2 \pi}_{0} h(R,\Phi|z,\phi_0)\,d\phi_0\nn\\
 &=&\int_{0}^{2 \pi}\hspace{-0.1cm}\int^{2 \pi}_{0} \int_{0}^{+\infty}s(z;r,\phi,\phi_0)
 \ln \left(s(z;r,\phi,\phi_0)\right) dr\,d\phi \,d\phi_0 \label{princbr},
 \end{IEEEeqnarray*}
 is analytic in $z \in \mathcal{O}$. The proof is based on the Morera's theorem.

\emph{Morera's Theorem  \cite{Sil}.}
If $f(z)$ is continuous in an open region $\mathcal{O}\subseteq\Complex$ and $\int_{\gamma} f(z)dz=0$ for any closed triangular contour $\gamma$ in $\mathcal{O}$, then $f(z)$ is analytic on $\mathcal{O}$.
\qed

We first show that $u(z)$ is continuous in $z\in \mathcal{O}$. If $z_0 \in \mathcal{O}$, then
\begin{align}
\lim_{z \rightarrow z_0} u(z) &= \lim_{z \rightarrow z_0} \int_{0}^{2 \pi}\hspace{-0.1cm}\int^{2 \pi}_{0}\hspace{-0.1cm}\int_{0}^{+\infty}\hspace{-0.15cm}s(z;r,\phi,\phi_0)\ln \left(s(z;r,\phi,\phi_0)\right) dr\,d\phi \,d\phi_0 \nn\\
&= \int_{0}^{2 \pi}\hspace{-0.1cm}\int^{2 \pi}_{0}\hspace{-0.1cm}\int_{0}^{+\infty}\hspace{-0.15cm} \lim_{z \rightarrow z_0}s(z;r,\phi,\phi_0)\ln \left(s(z;r,\phi,\phi_0)\right) dr\,d\phi \,d\phi_0 \label{morcont0}\\
&= \int_{0}^{2 \pi}\hspace{-0.1cm}\int^{2 \pi}_{0}\hspace{-0.1cm}\int_{0}^{+\infty}\hspace{-0.15cm} s(z_0;r,\phi,\phi_0)\ln \left(s(z_0;r,\phi,\phi_0)\right) dr\,d\phi \,d\phi_0 \label{morcont1}
\\
&=
u(z_0),\label{morcont999}
\end{align}
where equation~(\ref{morcont1}) holds since $s(z;r,\phi,\phi_0)\ln(s(z;r,\phi,\phi_0))$ is continuous in $z$, considering the 
convention \eqref{eq:entropy-density} when $s=0$. We justify changing the order of the limit and the sum in \eqref{morcont0} by finding, for large values of $r$, an upper bound on $|s(z;r,\phi,\phi_0)\ln s(z;r,\phi,\phi_0)|$ 
that is integrable in $(r, \phi,\phi_0)$ and independent of $z$. Let $\mathcal B(z_0,\zeta) \in \mathcal{O}$ be a neighborhood of $z_0$ and consider $z \in \mathcal{B}(z_0,\zeta)$. The bound in~\eqref{coeffCbd2} implies that $\sum^{+\infty}_{m = -\infty,  m \neq 0} C_{m}(r,z)e^{jm(\phi-\phi_0-\gamma z^2\mathcal{L})} = o\left(\frac{1}{r^2}\right)$. Furthermore, using~(\ref{condR}) 
\begin{equation*}
\left|p_{R|R_0}(r|z)\right| = \frac{2r}{\sigma^2\mathcal{L}}\left|e^{-\frac{r^2+z^2}{\sigma^2 \mathcal{L}}}\right| \left|I_0\left(\frac{2 r z}{\sigma^2 \mathcal{L}}\right)\right| \leq \frac{2r}{\sigma^2\mathcal{L}}e^{-\frac{r^2}{\sigma^2 \mathcal{L}}}e^{\frac{(|z_0| + \zeta)^2}{\sigma^2 \mathcal{L}}} e^{\frac{2(|z_0| + \zeta)r}{\sigma^2\mathcal{L}}} = o\left(\frac{1}{r^2}\right),
\end{equation*}
where we used Lemma~\ref{lem:bessel}-2 to write the inequality. Hence, equation~(\ref{eq:sz}) implies
\begin{equation}
   |s(z;r,\phi,\phi_0)| = o\left(\frac{1}{r^2}\right), 
   \label{bdcom}
\end{equation}
Note that, if $x\in\Complex$, then, since $\ln(x)=\ln |x|+j\textnormal{arg}(x)$, $|\ln(x)| \equiv |\ln |x||$ as $x\to 0$.
Thus $\bigl|s(z;r,\phi,\phi_0) \ln (s(z;r,\phi,\phi_0)\bigr| \equiv  
\Bigl||s(z;r,\phi,\phi_0)| \ln |s(z;r,\phi,\phi_0)|\Bigr|$ as $r \rightarrow +\infty$. We find an integrable upper bound on $\Bigl||s(z;r,\phi,\phi_0)|\ln |s(z;r,\phi,\phi_0)|\Bigr|$. When $r$ is large, $|s(z;r,\phi,\phi_0)| < 1$ and we apply the inequality \eqref{eq:xln(x)} to obtain
\begin{equation}
    \left||s(z;r,\phi,\phi_0)| \ln |s(z;r,\phi,\phi_0)|\right| \leq \frac{1}{1-\delta}|s(z;r,\phi,\phi_0)|^{\delta}, 
    \label{newbdcom}
\end{equation}
for any $\delta\in(0,1)$.  Choosing $\delta \in (1/2,1)$ and using \eqref{bdcom}, we obtain an upper bound on~\eqref{newbdcom} that is integrable in $(r,\phi,\phi_0)$ and independent of $z$. 

For the second part of Morera's theorem, we consider the integral of $u(z)$ over the boundary $\partial \Delta$ of a compact triangle $\Delta \subset \mathcal{O}$. We have  
\begin{align}
\int_{\partial \Delta} u(z) \,dz &= \int^{2 \pi}_{0}\hspace{-0.1cm} \int^{2 \pi}_{0}\hspace{-0.1cm}\int_{0}^{+\infty}\hspace{-0.15cm}\int_{\partial \Delta} s(z;r,\phi,\phi_0)\ln \left(s(z;r,\phi,\phi_0)\right) dz\,dr\,d\phi \,d\phi_0 \nn\\
&= 0 \label{morana1}.
\end{align}
Exchanging the integration order is justified from~\eqref{bdcom} and~\eqref{newbdcom} by Fubini's theorem. It is shown in part 2) that $s(z;r,\phi,\phi_0)\ln \left(s(z;r,\phi,\phi_0)\right)$ is analytic in $z \in \mathcal{O}$. Hence, its integral over a closed contour in $\mathcal{O}$ is zero which justifies~(\ref{morana1}). 

Equations~(\ref{morcont999}) and~(\ref{morana1}) imply that $u(z)$ is analytic on $\mathcal{O}$ by Morera's theorem.

\subsubsection{Analyticity of $\int_{0}^{+\infty}p_{R|R_0}(r|z) \ln\left(p_{R}(r;F_0^*)\right)\,dr$} 
\label{subana3}

The function
\begin{equation*}
\label{condRext}
p_{R|R_0}(r|z) = \frac{2r}{\sigma^2\mathcal{L}}e^{-\frac{r^2+z^2}{\sigma^2 \mathcal{L}}} I_0\left(\frac{2 r z}{\sigma^2 \mathcal{L}}\right),
\end{equation*}
is analytic in $z \in \Complex$. Consider
\begin{equation*}
w(z) = \int_{0}^{+\infty}p_{R|R_0}(r|z) \ln\left(p_{R}(r;F_0^*)\right)\,dr.
\end{equation*}
Applying the Morera's theorem and using the exponential decay of $p_{R|R_0}(r|z)$ in $r$ along with its analyticity in $z$, it can be shown that $w(z)$ is analytic on $\Complex$. The steps are similar to the proof of analyticity in part 3) above.

\subsubsection{Analyticity of $\text{LHS}_A(z)$  and $\text{LHS}_{\rho}(z)$}

Let $\text{LHS}_{A}(z)$ be an extension of $\text{LHS}_{A}(r_0)$ (defined in \eqref{eq:KKT}) to the complex plane:
\begin{equation*}
 \text{LHS}_{A}(z) =  \nu (\mathcal{C}(z) - A) + C + \int^{+\infty}_{0} p \left(r|z\right) \ln p(r;F_{0}^{*})\,dr + \frac{1}{2 \pi} \int_{0}^{2 \pi} h\left(R,\Phi|z,\phi_{0}\right) \, d\phi_0.
  \label{KKText}
\end{equation*}

From the results of the parts 3) and 4) above, and the property C2 of $\mathcal C(r_0)$ stated in Section~\ref{sec:main},
we obtain that $\text{LHS}_{A}(z)$ is analytic when $z \in \mathcal{O}_\delta \defeq  \mathcal{O} \cap \mathcal{S}_\delta$. 
In a similar manner, we have that $\text{LHS}_{\rho}(z)$ (defined in \eqref{eq:KKT1}) is analytically extendable to $\mathcal{O}$, hence to $\mathcal{O}_\delta$. 
\end{proof}

\section{Bounds on the KKT Conditions}
\label{app1}

\begin{lemma}[Upper Bound on $\text{LHS}_{\rho}(r_0)$]
\label{lemupplhs}
Consider $\textnormal{LHS}_{\rho}(r_0)$ (defined in \eqref{eq:KKT1}) for large $r_0$. For any $0 < \epsilon < 1$, there exists a $K > 0$ such that 
\begin{align*}
\textnormal{LHS}_{\rho}(r_0) \leq C + \ln \left(\frac{1}{K}\right) + \frac{r_0^2}{\sigma^2 \mathcal{L}} - \ln\left(1-\xi(r_0)\right) + \left(\rho - (1-\epsilon)r_0\right)\sqrt{\frac{\pi}{\sigma^2 \mathcal{L}}} \text{L}_{\frac{1}{2}}\left(-\frac{r_0^2}{\sigma^2 \mathcal{L}}\right),
\end{align*}
where $\xi(r_0)$ is defined in Lemma~\ref{lembds} and $ \text{L}_{\frac{1}{2}}(\cdot)$ is a Laguerre polynomial.
\end{lemma}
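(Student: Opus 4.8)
The plan is to upper-bound separately the three pieces of $\textnormal{LHS}_{\rho}(r_0)$ in \eqref{eq:KKT1} --- the constant $C-\ln(2\pi)$, the cross-entropy term $B(r_0)\defeq\int_0^{\infty}p(r|r_0)\ln p(r;F_0^*)\,dr$, and the averaged conditional entropy $\bar h(r_0)\defeq\frac{1}{2\pi}\int_0^{2\pi}h(R,\Phi|r_0,\phi_0)\,d\phi_0$ --- and then add the three bounds. Throughout, $R$ denotes a random variable with pdf $p_{R|R_0}(\cdot|r_0)$; by \eqref{eq:zerodiso} and Lemma~\ref{lem:chansym}-1 this is the Rician law of $|r_0+W|$ with $W\sim\normalc{0}{\sigma^2\const L}$, whose first moment is $\E[R]=\tfrac12\sqrt{\pi\sigma^2\const L}\,\textnormal{L}_{\frac12}\!\left(-r_0^2/(\sigma^2\const L)\right)$. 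Note also that, by the circular symmetry in \eqref{condpdf}, $p(r,\phi|r_0,\phi_0)$ depends on $(\phi,\phi_0)$ only through $\phi-\phi_0$, so $h(R,\Phi|r_0,\phi_0)$ is independent of $\phi_0$ and $\bar h(r_0)=h(R,\Phi|r_0,0)$.

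For $\bar h(r_0)$ I would start from the lower bound \eqref{condpdflbf} in Lemma~\ref{lembds}, which for $r_0$ large (so that $1-\xi(r_0)>0$, recall $\xi(r_0)\to 0$) gives $-\ln p(r,\phi|r_0,\phi_0)\le\ln(2\pi)-\ln p_{R|R_0}(r|r_0)-\ln(1-\xi(r_0))$ for $r>0$. Multiplying through by $p(r,\phi|r_0,\phi_0)\ge 0$, integrating over $(r,\phi)$, and carrying out the $\phi$-integral (which replaces $p(r,\phi|r_0,\phi_0)$ by its marginal $p_{R|R_0}(r|r_0)$) yields $\bar h(r_0)\le\ln(2\pi)-\ln(1-\xi(r_0))+h(R|R_0=r_0)$, where $h(R|R_0=r_0)=-\int_0^\infty p_{R|R_0}(r|r_0)\ln p_{R|R_0}(r|r_0)\,dr$.

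Next I would bound $h(R|R_0=r_0)$ and $B(r_0)$ from above by the same device, ``multiply a bound on $-\ln p$ by $p\ge 0$ and integrate.'' Applying $I_0(x)\ge Ke^{(1-\epsilon)x}$ (Lemma~\ref{lem:bessel}-4, which supplies the $K$ of the statement) to \eqref{condR} gives $p_{R|R_0}(r|r_0)\ge\frac{2r}{\sigma^2\const L}K\,e^{(1-\epsilon)\frac{2rr_0}{\sigma^2\const L}}e^{-\frac{r^2+r_0^2}{\sigma^2\const L}}$, hence $h(R|R_0=r_0)\le-\E\!\left[\ln\tfrac{2R}{\sigma^2\const L}\right]-\ln K-(1-\epsilon)\tfrac{2r_0}{\sigma^2\const L}\E[R]+\tfrac{\E[R^2]+r_0^2}{\sigma^2\const L}$. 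For $B(r_0)$ I would use the amplitude-marginal bound $p(r;F_{R_0}^*)\le\frac{2r}{\sigma^2\const L}e^{-\frac{r^2-2r\rho}{\sigma^2\const L}}$ established inside the proof of Lemma~\ref{lem:outbds}-1 (see \eqref{upperppre}), which gives $B(r_0)\le\E\!\left[\ln\tfrac{2R}{\sigma^2\const L}\right]-\tfrac{\E[R^2]}{\sigma^2\const L}+\tfrac{2\rho}{\sigma^2\const L}\E[R]$. One should first record that $\E[\ln R]$ and $\E[R^2]$ are finite (integrability at $r=0$ since $r\ln r\to 0$, and at $r=\infty$ from the Gaussian-type decay of $p_{R|R_0}(r|r_0)$) so that these identities are legitimate.

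Finally, $\textnormal{LHS}_{\rho}(r_0)=C-\ln(2\pi)+B(r_0)+\bar h(r_0)\le C-\ln(1-\xi(r_0))+\bigl(B(r_0)+h(R|R_0=r_0)\bigr)$, and adding the last two estimates the terms $\pm\E[\ln(2R/(\sigma^2\const L))]$ cancel while the two $\E[R^2]$ terms cancel up to the residue $r_0^2/(\sigma^2\const L)$, leaving $B(r_0)+h(R|R_0=r_0)\le\ln(1/K)+\tfrac{r_0^2}{\sigma^2\const L}+\bigl(\rho-(1-\epsilon)r_0\bigr)\tfrac{2}{\sigma^2\const L}\E[R]$. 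Substituting $\tfrac{2}{\sigma^2\const L}\E[R]=\sqrt{\pi/(\sigma^2\const L)}\,\textnormal{L}_{\frac12}\!\left(-r_0^2/(\sigma^2\const L)\right)$ then gives the claimed inequality. I expect the main obstacle to be getting these two exact cancellations right: they force one to keep $\E[\ln(2R/(\sigma^2\const L))]$ and $\E[R^2]$ symbolic through both estimates rather than bounding them crudely, and they rely on the (standard but needed) identification of the Rician first moment with a Laguerre polynomial; the rest is bookkeeping.
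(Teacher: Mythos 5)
Your proposal is correct and follows essentially the same route as the paper's proof: the same three-way decomposition, the lower bound of Lemma~\ref{lembds}-2 to reduce the averaged conditional entropy to $h(R|R_0=r_0)$ plus $\ln(2\pi)-\ln(1-\xi(r_0))$, Lemma~\ref{lem:bessel}-4 for the $K e^{(1-\epsilon)x}$ bound on $I_0$, the peak-constraint output bound \eqref{upperppre} for the cross-entropy term, and the Rician moment identities \eqref{fmric}--\eqref{smric}, with the same cancellations of the $\E[\ln R]$ and $\E[R^2]$ terms. The only difference is notational (keeping the moments symbolic as expectations rather than writing the integrals out), so no further comment is needed.
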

\begin{proof}

Let $F_0^*\defeq F_{R^*_0}(r_0) \in \mathcal{P}$ be the optimal input cdf. Using the upper bound on $p(r,F_0^*)$ given in~(\ref{upperppre}), we have
\begin{align}
&\int^{+\infty}_{0} p_{R|R_0}\left(r|r_0\right) \ln p(r;F_{0}^{*})\,dr \nonumber\\
&\leq \ln\left(\frac{2}{\sigma^2 \mathcal{L}}\right) + \int_{0}^{\infty}\ln(r) p_{R|R_0}(r|r_0)\,dr  -\frac{1}{\sigma^2 \mathcal{L}}\int_{0}^{\infty}r^2 p_{R|R_0}(r|r_0)\,dr +  \frac{2 \rho}{\sigma^2 \mathcal{L}} \int_{0}^{\infty}r p_{R|R_0}(r|r_0)\,dr\nonumber\\
&= \ln\left(\frac{2}{\sigma^2 \mathcal{L}}\right) + \int_{0}^{\infty}\ln(r) p_{R|R_0}(r|r_0)\,dr  -\frac{1}{\sigma^2 \mathcal{L}} \left(\sigma^{2} \mathcal{L} + r^2_0\right) + \rho \sqrt{\frac{\pi}{\sigma^2 \mathcal{L}}} \text{L}_{\frac{1}{2}}\left(-\frac{r_0^2}{\sigma^2 \mathcal{L}}\right), \label{iub}  
\end{align}
where we used the fact that $p_{R|R_0}(r|r_0)$ is a Rician pdf with parameters
$\left(r_0,\frac{\sigma^2 \mathcal{L}}{2}\right)$. The first two moments of this pdf are 
\begin{IEEEeqnarray}{rCl}
  \int\limits_0^\infty r p_{R|R_0}(r|r_0)\der r&=& \frac{\sigma \sqrt{\pi \mathcal{L}}}{2} \,\text{L}_{\frac{1}{2}}\left(\frac{- r_0^2}{\sigma^2 \mathcal{L}}\right), \label{fmric}
\\
  \int\limits_0^\infty r^2 p_{R|R_0}(r|r_0)\der r&=&\sigma^{2}
  \mathcal{L} + r^2_0 \label{smric}.
\end{IEEEeqnarray}

Furthermore, let $r_0>0$ be a large number and let $0 < \epsilon < 1$. Using the lower bound on the
conditional pdf in Lemma~\ref{lembds}-2 we have
\begin{align}
&\frac{1}{2\pi}\int_{0}^{2 \pi} h\left(R,\Phi|r_{0},\phi_{0}\right)\,d\phi_0 \nn\\
&=   - \frac{1}{2\pi} \hspace{-0.07cm} \int_{0}^{2 \pi} \hspace{-0.12cm} \int^{+\infty}_{0} \hspace{-0.12cm}\int_{0}^{2 \pi} p(r,\phi | r_0,\phi_0) \ln p(r,\phi | r_0,\phi_0) d\phi dr d\phi_0 \nonumber\\
&\overset{(a)}{\leq} \ln(2 \pi)- \ln\left(1-\xi(r_0)\right) - \int^{+\infty}_{0}p_{R|R_0}(r|r_0)\ln\left(p_{R|R_0}(r|r_0)\right)\,dr \nn\\
&= - \ln \left(\frac{1}{\pi \sigma^2 \mathcal{L}}\right) - \int^{+\infty}_{0} \ln (r) p_{R|R_0}(r| r_0) \,dr + \frac{1}{\sigma^2 \mathcal{L}} \left(r_0^2 + \int_0^{+ \infty} r^2 p_{R|R_0}(r| r_0) \,dr\right) \nonumber\\
&\qquad \qquad \qquad \qquad \qquad \qquad  \qquad \qquad \qquad \qquad - \int^{+\infty}_{0} \ln \left(I_0\left(\frac{2rr_0}{\sigma^2 \mathcal{L}}\right)\right) p_{R|R_0}(r| r_0) \,dr - \ln\left(1-\xi(r_0)\right)\nonumber \\
&\leq \ln \left(\pi \sigma^2 \mathcal{L}\right) \hspace{-0.05cm} -  \hspace{-0.05cm}\int^{+\infty}_{0}  \hspace{-0.05cm} \ln (r) p_{R|R_0}(r| r_0) \,dr   \hspace{-0.05cm} -  \hspace{-0.05cm} \ln\left(1-\xi(r_0)\right) + \frac{ 2 r_0^2 + \sigma^2 \mathcal{L}}{\sigma^2 \mathcal{L}}- \int^{+\infty}_{0} \ln \left(Ke^{(1-\epsilon)\frac{2rr_0}{\sigma^2 \mathcal{L}}}\right) p_{R|R_0}(r| r_0) \,dr \label{lastlab}\\
&= \ln \left(\frac{\pi \sigma^2 \mathcal{L}}{K}\right) \hspace{-0.05cm} - \hspace{-0.05cm} \int^{+\infty}_{0} \hspace{-0.05cm}\ln (r) p_{R|R_0}(r| r_0) \,dr \hspace{-0.05cm} - \hspace{-0.05cm} \ln\left(1-\xi(r_0)\right) + \frac{ 2 r_0^2 + \sigma^2 \mathcal{L}}{\sigma^2 \mathcal{L}} - (1-\epsilon)\frac{2 r_0}{\sigma^2 \mathcal{L}} \int^{+\infty}_{0}r p_{R|R_0}(r| r_0) \,dr \nonumber\\
&=  \ln \left(\frac{\pi \sigma^2 \mathcal{L}}{K}\right) \hspace{-0.05cm} - \hspace{-0.05cm} \int^{+\infty}_{0} \hspace{-0.05cm} \ln (r) p_{R|R_0}(r| r_0) \,dr \hspace{-0.05cm} - \hspace{-0.05cm}  \ln\left(1-\xi(r_0)\right) + \frac{1}{\sigma^2 \mathcal{L}} \left(2 r_0^2 + \sigma^2 \mathcal{L}\right)  - (1-\epsilon)r_0 \sqrt{\frac{\pi}{\sigma^2 \mathcal{L}}} \text{L}_{\frac{1}{2}}\left(-\frac{r_0^2}{\sigma^2 \mathcal{L}}\right),\label{condentub}
\end{align}
where step (a) is due to Lemma~\ref{lembds}-2 and where we used Lemma~\ref{lem:bessel}-4 for some $K>0$, and equations~(\ref{fmric}) and~(\ref{smric}) in order to write~(\ref{condentub}). Finally, combining (\ref{iub}) and~(\ref{condentub}), we have for sufficiently large $r_0$:

\begin{align*}
 \text{LHS}_{\rho}(r_0) &=  C  - \ln(2\pi) + \int^{+\infty}_{0} p \left(r|r_0\right) \ln p(r;F_{0}^{*})\,dr + \frac{1}{2 \pi} \int_{0}^{2 \pi} h\left(R,\Phi|r_{0},\phi_{0}\right) \, d\phi_0 \nonumber\\
 &\leq C  - \ln(2\pi) + \ln\left(\frac{2}{\sigma^2 \mathcal{L}}\right) + \int_{0}^{\infty}\ln(r) p_{R|R_0}(r|r_0)\,dr -\frac{1}{\sigma^2 \mathcal{L}} \left(\sigma^{2} \mathcal{L} +r^2_0\right) \nn\\
 & \qquad + \rho \sqrt{\frac{\pi}{\sigma^2 \mathcal{L}}} \text{L}_{\frac{1}{2}}\left(-\frac{r_0^2}{\sigma^2 \mathcal{L}}\right)+\ln \left(\frac{\pi \sigma^2 \mathcal{L}}{K}\right) - \int^{+\infty}_{0} \ln (r) p_{R|R_0}(r| r_0) \,dr - \ln\left(1-\xi(r_0)\right) \nn\\
 & \qquad \qquad + \frac{1}{\sigma^2 \mathcal{L}} \left(2 r_0^2 + \sigma^2 \mathcal{L}\right) - (1-\epsilon)r_0 \sqrt{\frac{\pi}{\sigma^2 \mathcal{L}}} \text{L}_{\frac{1}{2}}\left(-\frac{r_0^2}{\sigma^2 \mathcal{L}}\right) \nonumber\\
 &= C + \ln \left(\frac{1}{K}\right) + \frac{r_0^2}{\sigma^2 \mathcal{L}} - \ln\left(1-\xi(r_0)\right) + \left(\rho - (1-\epsilon)r_0\right)\sqrt{\frac{\pi}{\sigma^2 \mathcal{L}}} \text{L}_{\frac{1}{2}}\left(-\frac{r_0^2}{\sigma^2 \mathcal{L}}\right). 
\end{align*}
\end{proof}

\begin{lemma}[Lower Bound on $\textnormal{LHS}_{A}(r_0)$]
\label{lemlowlhs}
The LHS of ~(\ref{eq:KKT}) satisfies:
\begin{align*}
\textnormal{LHS}_{A}(r_0) &> \nu (\mathcal{C}(r_0) - A) + C + \ln\left(\frac{k_1}{ 2 \pi k_u}\right)  + \frac{1}{\sigma^2 \mathcal{L}} r_0^2  - r_0 \sqrt{\frac{\pi}{\sigma^2 \mathcal{L}}} \text{L}_{\frac{1}{2}}\left(-\frac{r_0^2}{\sigma^2 \mathcal{L}}\right),
\end{align*}
where $k_u$ is defined in Lemma~\ref{lembds}, $k_1 = \int_{0}^{+\infty} e^{-\frac{r_0^2}{\sigma^2
    \mathcal{L}}}\,dF_{R_0^*}(r_0)$, and $\text{L}_{\frac{1}{2}}(\cdot)$ is a Laguerre polynomial.
\end{lemma}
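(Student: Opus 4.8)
The plan is to lower-bound the two non-elementary terms of $\text{LHS}_A(r_0)$ separately and then add the estimates, exploiting the fact that the only term without a closed form cancels. Writing out \eqref{eq:KKT} and using that the amplitude conditional is $p_{R|R_0}(r|r_0)$ (as in the companion Lemma~\ref{lemupplhs}),
\begin{equation*}
\text{LHS}_A(r_0) = \nu(\mathcal{C}(r_0)-A) + C - \ln(2\pi) + \int_0^{+\infty} p_{R|R_0}(r|r_0)\ln p(r;F_0^*)\,dr + \frac{1}{2\pi}\int_0^{2\pi} h(R,\Phi|r_0,\phi_0)\,d\phi_0 .
\end{equation*}

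First I would bound the output–entropy–density term. Since $\ln$ is increasing and the lower bound on $p(r;F_{R_0}^*)$ in Lemma~\ref{lem:outbds}-3 is strictly positive for $r>0$, one has $\ln p(r;F_0^*)\ge \ln\frac{2k_1}{\sigma^2\mathcal{L}}+\ln r-\frac{r^2}{\sigma^2\mathcal{L}}$; integrating against $p_{R|R_0}(r|r_0)$ and using the Rician second moment \eqref{smric} gives
\begin{equation*}
\int_0^{+\infty} p_{R|R_0}(r|r_0)\ln p(r;F_0^*)\,dr \ \ge\ \ln\frac{2k_1}{\sigma^2\mathcal{L}} + \int_0^{+\infty} p_{R|R_0}(r|r_0)\ln r\,dr - 1 - \frac{r_0^2}{\sigma^2\mathcal{L}} .
\end{equation*}
Next I would bound the conditional–entropy term. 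Using the strict upper bound $p(r,\phi|r_0,\phi_0)<k_u\,p_{R|R_0}(r|r_0)$ of Lemma~\ref{lembds}-1, the pointwise inequality $-p\ln p> -p\ln(k_u p_{R|R_0}(r|r_0))$ valid wherever $p>0$, and $\int_0^{2\pi}p(r,\phi|r_0,\phi_0)\,d\phi=p_{R|R_0}(r|r_0)$, one gets, for every $\phi_0$, $h(R,\Phi|r_0,\phi_0) > -\ln k_u + h(R|r_0)$ with $h(R|r_0)\defeq -\int_0^{+\infty} p_{R|R_0}(r|r_0)\ln p_{R|R_0}(r|r_0)\,dr$. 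Expanding $\ln p_{R|R_0}(r|r_0)$ from \eqref{condR}, bounding $\ln I_0\!\bigl(\tfrac{2rr_0}{\sigma^2\mathcal{L}}\bigr)\le \tfrac{2rr_0}{\sigma^2\mathcal{L}}$ via Lemma~\ref{lem:bessel}-2, and using both Rician moments \eqref{fmric}--\eqref{smric},
\begin{equation*}
h(R|r_0) \ \ge\ -\ln\frac{2}{\sigma^2\mathcal{L}} - \int_0^{+\infty} p_{R|R_0}(r|r_0)\ln r\,dr + 1 + \frac{2r_0^2}{\sigma^2\mathcal{L}} - r_0\sqrt{\frac{\pi}{\sigma^2\mathcal{L}}}\,\text{L}_{\frac{1}{2}}\!\Bigl(-\frac{r_0^2}{\sigma^2\mathcal{L}}\Bigr),
\end{equation*}
and averaging over $\phi_0\in[0,2\pi)$ leaves this unchanged.

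Adding the two displayed lower bounds to $\nu(\mathcal{C}(r_0)-A)+C-\ln(2\pi)$: the two copies of $\int_0^{+\infty} p_{R|R_0}(r|r_0)\ln r\,dr$ cancel, the constants $-1$ and $+1$ cancel, the $r_0^2$ terms combine to $+\tfrac{r_0^2}{\sigma^2\mathcal{L}}$, and the logarithms collapse via $\ln\frac{2k_1}{\sigma^2\mathcal{L}}-\ln\frac{2}{\sigma^2\mathcal{L}}=\ln k_1$ to $\ln\frac{k_1}{2\pi k_u}$. This is precisely the asserted inequality, strict because of the strict bound in Lemma~\ref{lembds}-1.

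The hard part is not any individual estimate but keeping the direction of every inequality consistent: one needs a \emph{lower} bound on the output pdf in the first step (monotonicity of $\ln$) but an \emph{upper} bound on the conditional pdf in the second (since $p\mapsto-p\ln p$ is decreasing in the argument of the logarithm), and the non-elementary quantity $\int_0^{+\infty} p_{R|R_0}(r|r_0)\ln r\,dr$ must be carried symbolically through both steps so that it cancels at the end rather than being estimated. Everything else is bookkeeping with the Rician first and second moments.
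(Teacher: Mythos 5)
Your proposal is correct and follows essentially the same route as the paper: lower-bound the first integral via the output-pdf lower bound $p(r;F_{R_0}^*)\geq \frac{2k_1 r}{\sigma^2\mathcal{L}}e^{-r^2/(\sigma^2\mathcal{L})}$ of Lemma~\ref{lem:outbds}-3, lower-bound the conditional entropy via the strict upper bound of Lemma~\ref{lembds}-1 together with $\ln I_0(x)\leq x$ and the Rician moments \eqref{fmric}--\eqref{smric}, and let the $\int \ln(r)\,p_{R|R_0}(r|r_0)\,dr$ terms cancel when combining. The bookkeeping of constants ($\ln\frac{k_1}{2\pi k_u}$, the $+\frac{r_0^2}{\sigma^2\mathcal{L}}$ coefficient, and the strictness of the inequality) matches the paper's proof exactly.
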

\begin{proof}

Let $F_0^*\defeq F_{R^*_0}(r_0) \in \mathcal{A}$ be the optimal input cdf. The first integral in $\text{LHS}_{A}(r_0)$ is
lower bounded by
\begin{align}
\int^{+\infty}_{0} p_{R|R_0}\left(r|r_0\right) \ln p(r;F_{0}^{*})\,dr &\geq \hspace{-0.05cm}\int\limits^{\infty}_{0} \ln (r) p_{R|R_0} \left(r|r_0\right)\der r  +  \ln \left(\frac{2 k_1}{\sigma^2 \mathcal{L}}\right) 
 -  \frac{\sigma^{2} \mathcal{L} + r^2_0}{\sigma^2 \mathcal{L}},
 \label{lowbd101}
\end{align}
where we used the lower bound on $p(r,F_0^*)$ in Lemma~\ref{lem:outbds}-3 and \eqref{smric}.

The second integral in $\text{LHS}_{A}(r_0)$ is lower bounded as
\begin{align}
  \int\limits_{0}^{2 \pi} h\left(R,\Phi|r_{0},\phi_{0}\right) \, d\phi_0
&\overset{(b)}{>}  \hspace{-0.05cm} - \hspace{-0.1cm}\int\limits_{0}^{2 \pi}  \hspace{-0.1cm} \int\limits^{+\infty}_{0}  \hspace{-0.05cm} \int\limits_{0}^{2 \pi}  \hspace{-0.05cm} p(r,\phi | r_0,\phi_0) \ln \left(k_u p_{R|R_0}(r|r_0)\right)  \hspace{-0.05cm} \,d\phi \,dr \,d\phi_0 \nonumber\\
&= - \int\limits^{+\infty}_{0} \ln \left(k_u p_{R|R_0}(r|r_0)\right)
\der r \int\limits_{0}^{2 \pi}  \int\limits_{0}^{2 \pi}p(r,\phi | r_0,\phi_0) \,d\phi\,d\phi_0 \nonumber\\
&= - 2\pi \int\limits^{+\infty}_{0} \ln \left(k_u p_{R|R_0}(r|r_0)\right) p_{R|R_0}(r| r_0) \,dr \nonumber\\
&= 2\pi \left(h(R|r_0) - \ln(k_u)\right) , \label{intmbd}
\end{align}
where step $(b)$ is due to Lemma~\ref{lembds}-1. From \eqref{condR}, $h(R|r_0)$ can be lower-bounded as:
\begin{align}
h(R|r_0) &= -\int^{+\infty}_{0} p_{R|R_0}(r| r_0) \ln \left(p_{R|R_0}(r|r_0)\right) \,dr \nonumber\\
&= -\ln \left(\frac{2}{\sigma^2 \mathcal{L}}\right)  + \frac{1}{\sigma^2 \mathcal{L}} \left(r_0^2 + \int_0^{+ \infty} r^2 p_{R|R_0}(r| r_0) \,dr\right)  - \int^{+\infty}_{0} \ln \left(r I_0\left(\frac{2rr_0}{\sigma^2 \mathcal{L}}\right)\right) p_{R|R_0}(r| r_0) \,dr \nonumber \\
&\geq \ln \left(\frac{\sigma^2 \mathcal{L}}{2}\right) - \int^{+\infty}_{0} \ln (r) p_{R|R_0}(r| r_0) \,dr + \frac{1}{\sigma^2 \mathcal{L}} \left(2 r_0^2 + \sigma^2 \mathcal{L}\right) -  \frac{2r_0}{\sigma^2 \mathcal{L}} \int^{+\infty}_{0} r p_{R|R_0}(r| r_0) \,dr \label{bdbes}\\
&= \ln \left(\frac{e\, \sigma^2 \mathcal{L}}{2}\right) - \int^{+\infty}_{0} \ln (r) p_{R|R_0}(r| r_0) \,dr  + \frac{2}{\sigma^2 \mathcal{L}} r_0^2 - r_0 \sqrt{\frac{\pi}{\sigma^2 \mathcal{L}}} \text{L}_{\frac{1}{2}}\left(-\frac{r_0^2}{\sigma^2 \mathcal{L}}\right),
\label{bdbes2}
\end{align} 
where we used Lemma~\ref{lem:bessel}-2 in \eqref{bdbes} and \eqref{fmric} in \eqref{bdbes2}. Substituting \eqref{bdbes2} into 
\eqref{intmbd}
\begin{align}
\frac{1}{2 \pi} \int_{0}^{2 \pi} h\left(R,\Phi|r_{0},\phi_{0}\right) \, d\phi_0 >  \ln \left(\frac{e\,\sigma^2 \mathcal{L}}{2  k_u}\right) + \frac{2}{\sigma^2 \mathcal{L}} r_0^2 &- \int^{+\infty}_{0} \ln (r) p_{R|R_0}(r| r_0) \,dr  - r_0 \sqrt{\frac{\pi}{\sigma^2 \mathcal{L}}} \text{L}_{\frac{1}{2}}\left(-\frac{r_0^2}{\sigma^2 \mathcal{L}}\right)\label{fbd}.
\end{align}

Finally, using lower bounds \eqref{lowbd101} and \eqref{fbd}, we obtain
\begin{align*}
\text{LHS}_{A}(r_0) &> \nu (\mathcal{C}(r_0) - A) + C  - \ln(2\pi) + \ln \left(\frac{2 k_1}{\sigma^2 \mathcal{L}}\right) + \ln \left(\frac{e\,\sigma^2 \mathcal{L}}{2  k_u}\right) \nonumber\\
& \qquad \qquad + \int^{+\infty}_{0} \ln (r) p_{R|R_0} \left(r|r_0\right)\,dr - \int^{+\infty}_{0} \ln (r) p_{R|R_0}(r| r_0) \,dr \nn\\
& \qquad \qquad \qquad \qquad - \frac{1}{\sigma^2 \mathcal{L}} \left(\sigma^{2} \mathcal{L} + r^2_0\right)  + \frac{2}{\sigma^2 \mathcal{L}} r_0^2 - r_0 \sqrt{\frac{\pi}{\sigma^2 \mathcal{L}}} \text{L}_{\frac{1}{2}}\left(-\frac{r_0^2}{\sigma^2 \mathcal{L}}\right)\nonumber\\
&=  \nu (\mathcal{C}(r_0) - A) + C + \ln\left(\frac{k_1}{ 2 \pi k_u}\right)  + \frac{1}{\sigma^2 \mathcal{L}} r_0^2  - r_0 \sqrt{\frac{\pi}{\sigma^2 \mathcal{L}}} \text{L}_{\frac{1}{2}}\left(-\frac{r_0^2}{\sigma^2 \mathcal{L}}\right).
\label{finalbd}
\end{align*}
\end{proof}

 \bibliographystyle{IEEEtran}
\bibliography{refs2.bib}
\end{document}